\Crefname{figure}{{Fig.}}{{Figs.}}
\apptocmd{\thebibliography}{\raggedright}{}{}
\newtheorem{lemma}{Lemma}
\newtheorem{theorem}{Theorem}
\newtheorem{corollary}[lemma]{Corollary}
\newtheorem{observation}[lemma]{Observation}
\newtheorem{problem}{Problem}
\newcommand{\set}[1]{\{#1\}}
\newcommand{\inset}[2]{\{#1 \mid #2\}}
\newcommand{\size}[1]{\left|#1\right|}
\newcommand{\order}[1]{O(#1)}
\newcommand{\In}{\mathit{In}}
\newcommand{\Ex}{\mathit{Ex}}
\newcommand{\Vin}{V_\mathit{In}}
\newcommand{\trim}[1]{{\tt trim(}#1{\tt )}}
\newcommand{\Efar}{\ensuremath{E_{\tt far}}\xspace}
\newcommand{\changed}[1]{#1}
\title{Enumerating Graphlets with Amortized Time Complexity Independent of Graph Size}
\author[5]{\fnm{Alessio} \sur{Conte}}\email{alessio.conte@unipi.it}
\author[5]{\fnm{Roberto} \sur{Grossi}}\email{roberto.grossi@unipi.it}
\author[1]{\fnm{Yasuaki} \sur{Kobayashi}}\email{koba@ist.hokudai.ac.jp}
\author[3]{\fnm{Kazuhiro} \sur{Kurita}}\email{kurita@i.nagoya-u.ac.jp}
\author*[6]{\fnm{Davide} \sur{Rucci}}\email{davide.rucci@isti.cnr.it}
\author[4]{\fnm{Takeaki} \sur{Uno}}\email{uno@nii.ac.jp}
\author[2]{\fnm{Kunihiro} \sur{Wasa}}\email{wasa@hosei.ac.jp}
\affil[1]{\orgname{Hokkaido University}, \city{Sapporo}, \country{Japan}}
\affil[2]{\orgname{Hosei University}, \city{Tokyo}, Japan}
\affil[3]{Nagoya University, \city{Nagoya}, \country{Japan}}
\affil[4]{\orgname{National Institute of Informatics}, \city{Tokyo}, \country{Japan}}
\affil[5]{\orgname{University of Pisa}, \city{Pisa}, \country{Italy}}
\affil*[6]{\orgname{ISTI-CNR}, \city{Pisa}, \country{Italy}}
\date{}
\begin{document}

\abstract{
Graphlets of order~$k$ in a graph $G$ are connected subgraphs induced by $k$~nodes (called $k$-graphlets) or by $k$~edges (called edge $k$-graphlets).
They are among the interesting subgraphs in network analysis to get insights on both the local and global structure of a network.
While several algorithms exist for discovering and enumerating graphlets, the \changed{amortized time complexity} of such algorithms typically depends on the size of the graph $G$, or its maximum degree. In real networks, even the latter can be in the order of millions, whereas $k$ is typically required to be a small value.

In this paper we provide the first algorithm to list all graphlets of order~$k$ in a graph $G=(V,E)$ with an \changed{amortized time complexity} depending \emph{solely} on the order $k$, contrarily to previous approaches where the cost depends \emph{also} on the size of $G$ or its maximum degree. Specifically, we show that it is possible to list $k$-graphlets in $O(k^2)$ time per solution, and to list edge $k$-graphlets in $O(k)$ time per solution. 
Furthermore we show that, if the input graph has bounded degree, then the \changed{amortized time} for listing $k$-graphlets is reduced to $O(k)$. Whenever $k = O(1)$, as it is often the case in practical settings, these algorithms are the first to achieve constant time per solution.
}

\keywords{Enumeration, Graph Algorithms, Amortization, Graphlets, Subgraphs}

\maketitle

\section{Introduction}

Subgraphs are a well-known example of \emph{patterns} or \emph{motifs} that can be extracted from graphs to discover meaningful insights on both the local and global structure of the network they represent. In recent years there has been growing attention on the graph motif problem~\cite{lacroix2006motif, ciriello2008review_motifs, yu2020motif_survey} which asks to list or count a specific kind of pattern (formalized as subgraphs) in a complex network. 
These subgraphs not only offer valuable information on the underlying network's properties, they also serve as building blocks in many fields such as biology and bioinformatics, sociology, computer networks, and transport engineering: in fact, in these kinds of networks, subgraphs are able to capture and highlight the community structure of the network itself \cite{FORTUNATO_community}, a concept that is ever so important when people or any other entities are tightly interconnected. 
In this paper, we study two possible specializations of the main subgraph definition for a given integer, called order $k$: vertex-induced $k$-subgraphs (known as $k$-graphlets~\cite{10.1093/bioinformatics/bth436}) and edge-induced $k$-subgraphs, as illustrated in~\Cref{fig:subgraph_example}. 
Namely, a $k$-\emph{graphlet} is a connected subgraph induced by $k$ vertices.
In this paper, to be consistent with $k$-graphlets, we refer to a connected edge-induced subgraph with $k$ edges as \emph{edge $k$-graphlet}.
\begin{figure}[ht]
    \centering
    \subfloat[]{
        \centering
        \includegraphics[scale=.53]{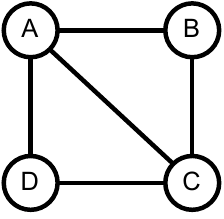}
        \label{fig:subgraph_example:sub:graph}
    }
    \hfill
    \subfloat[]{
        \centering
        \includegraphics[scale=.5]{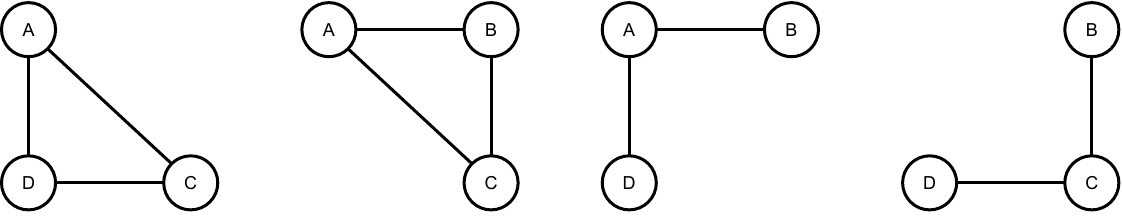}
        \label{fig:subgraph_example:sub:graphlets}
    }
    \vfill
    \subfloat[]{
        \centering
        \includegraphics[width=.7\textwidth]{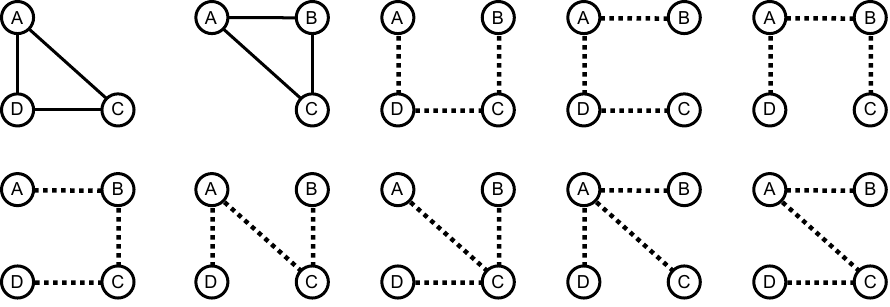}
        \label{fig:subgraph_example:sub:edge-graphlets}
    }
    \caption{(a) A graph $G=(V,E)$ with $|V|=4$ vertices and $|E|=5$ edges. (b) All \emph{3-graphlets} contained in $G$. (c) All edge 3-subgraphs, i.e., \emph{edge $3$-graphlets} in $G$, where dotted edges denote subgraphs that are also acyclic (\emph{3-subtrees}). Note that some of the $k$-graphlets are also edge $k$-graphlets, but this is not necessarily so; also, some edge $k$-graphlets have a number of vertices different from $k$. }
    \label{fig:subgraph_example}
\end{figure}

We observe that $k$-graphlets are getting momentum for their numerous fields of application, such as large graph comparison~\cite{graphlet_kernels, Kriege2020graphkernels} again by kernelization, understanding the role of genes in pathways and cancer mechanisms~\cite{windels2022graphlet}, and finding important nodes in a network~\cite{aparicio2019finding}. Special cases of $k$-graphlets, such as triangles when $k = 3$, or cliques, have been investigated for community detection purposes, as they help speeding up the process of identifying (possibly larger) communities in graphs \cite{klymko2014using, shaping_communities_triangles, triangles_social_cohesion, latent_triangle}. Also, edge $k$-graphlets find their usage in wildly different domains~\cite{chin2018subtrees}, such as graph databases where they are used for indexing purposes~\cite{subtree_graph_indexing}, kernelization in machine learning models~\cite{GAUZERE_tree_kernel}, and planar graphs~\cite{SUN2022127404}, to name a few.  

We now state the problems that will be considered in this paper.
\begin{problem}[$k$-Graphlet Enumeration]
\label{prob1}
    Given an undirected graph $G = (V, E)$, list all the connected subgraphs of $G$ induced by a subset $C \subseteq V $ of $k$ vertices (see~\Cref{fig:subgraph_example:sub:graphlets}).
\end{problem}
\begin{problem}[Edge $k$-Graphlet Enumeration]
\label{prob2}
    Given an undirected graph $G = (V, E)$, list all the connected subgraphs of $G$ induced by a subset $S \subseteq E$ of $k$ edges (see~\Cref{fig:subgraph_example:sub:edge-graphlets}).
\end{problem}


As these problems have an exponential amount of solutions, the \emph{input time complexity} of any algorithm solving these problems is exponential in the worst case; however, when tackling enumeration tasks it is common practice to give an \emph{output-sensitive} analysis instead, where the time complexity as a function of the output size. Specifically, let $N$ be the total number of solutions found. If the overall cost is $O(\mathrm{poly}(|V|) + N \cdot f(k))$ time for some function $f()$, we say that the \emph{amortized} cost per solution,  \changed{i.e., the amortized time complexity} is $O(f(k))$\footnote{$O(\mathrm{poly}(|V|))$ does not \changed{contribute to} the amortized time since it is independent of $N$.}. It is output-sensitive as this cost grows linearly with $N$, the number of solutions. 

Our main result is to provide new enumeration algorithms requiring $O(k^2)$ amortized time per $k$-graphlet in~\Cref{prob1}, and $O(k)$ amortized time per edge $k$-graphlet in~\Cref{prob2}. When compared to the state of the art discussed in~\Cref{sec:related}, \textit{(i)}~we provide a faster \changed{amortized cost}, and \textit{(ii)}~we obtain the first parameterized cost per solution that has no dependency on the graph size $|V|$ or $|E|$, or its maximum degree $\Delta$: typically, order $k$ is a small value whereas $|V|$, $|E|$, and \changed{$\Delta$} can be very large, so our algorithms scale well as their \changed{amortized time complexity} depends on $k$ alone.
\changed{Based on our results, it is also possible to derive \emph{delay} bounds, i.e., to bound the time that interleaves between the output of two consecutive solutions. 
For instance, using simple arguments, we obtain a $O(k^3 + k\Delta)$ delay for the $k$-graphlet enumeration task, but we also how that with more refined techniques we can achieve a $O(k^2)$ delay.}

On the technical side, our contribution is the following. We note that enumeration algorithms based on the so-called binary partition approach often exhibit simplicity in both correctness and algorithmic behavior: binary partition involves determining how to expand each partial solution by either including an additional element $e$ or explicitly excluding $e$. For instance, algorithms for enumerating paths, cycles, spanning trees, induced matchings, induced subtrees, induced subgraphs with a girth constraint, dominating sets, induced bipartite graphs, and $k$-degenerate induced subgraphs, all make use of binary partition (the reader can refer to~\cite{Wasa16}). However, performing a refined analysis of these algorithms presents a significant challenge. In our proposed algorithms for~\Cref{prob2}, we leverage the \emph{push-out amortization} technique introduced in~\cite{DBLP:conf/wads/Uno15}. The crux of this analysis lies in speeding up the lower part of the enumeration tree, where algorithms based on binary partition generate recursive calls, forming a tree structure. In many instances, a recursive call within the lower part of this tree structure deals with an input graph featuring a substantial number of vertices/edges either considered or forbidden for use. In such cases, the possibility arises to eliminate these specified vertices/edges, thereby reducing the size of the input graph. This operation, in turn, accelerates the computation time of the lower part of the enumeration tree.
As the size of an enumeration tree is definitively influenced by the size of its lower section, where the number of descendants increases exponentially, hastening the lower part of an enumeration tree results in exponential improvements in overall efficiency.
A similar idea is called \emph{kernelization}~\cite{Niedermeier:IPL:2000, Damaschke:TCS:2006} in the field of parameterized algorithms, reducing the size of the input graph.

\subsection{Related Work} 
\label{sec:related}

For $k$-graphlets \cite{10.1093/bioinformatics/bth436}, there are numerous studies on techniques for enumerating all of them, those with exactly $k$ vertices and those with \emph{at most} $k$ vertices\footnote{It is also possible to enumerate all vertex-induced connected subgraphs, regardless of the value of $k$: for this task there already exists a constant amortized time algorithm~\cite{DBLP:conf/wads/Uno15}.}, each of these variants being suitable for different kinds of applications.
Historically, one of the first algorithms able to enumerate all $k$-graphlets in a generic graph is called ESU \cite{wernicke_esu, fanmod}, and works with a classical binary partition approach.
Other tools for this task include Escape \cite{escape_algorithm}, FaSE \cite{fase_algorithm}, Kavosh \cite{Kashani2009_kavosh}, and Jesse~\cite{10.1093/bioinformatics/jesse} among others, for which we refer the reader to the survey in \cite{graphlet_survey_2021}. 
An interesting case is the PGD algorithm \cite{PGD_algorithm}, which is able to compute the exact number of $k$-graphlets without having to explicitly enumerate all of them, by some clever combinatorial arguments. 
However, due to the nature of its combinatorial reasoning, PGD is limited to values of $k \leq 4$, because even for $k = 5$ the cases to be considered are too many to have both a reasonable running time and a small number of theoretical arguments to be exploited.

A recent work by Komusuewicz and Sommer \cite{Komusuewicz:Enumerating:2020}, later improved in practice by Conte et al. \cite{practical_graphlets}, sets the state of the art to $O(k^2\Delta)$ delay time per $k$-graphlet, where $\Delta$ is the maximum degree of the input graph $G$, and the delay is the worst case time to output the next solution, and cannot be smaller than the \changed{amortized time}. For the relaxed case of graphlets with \emph{at most} $k$ vertices, the delay becomes $O(k + \Delta)$ time. Wang et al.~\cite{WANG2024106425} have recently proposed three algorithms with delay, respectively, of $\order{k \cdot \min\set{|V|-k, k\Delta}(k\log \Delta +\log |V|)}$, $\order{k \cdot \min\set{|V|
-k, k\Delta}\cdot |V|}$, and $\order{k^2\cdot \min\set{|V|-k, k\Delta}\cdot\min\set{k, \Delta}}$: these bounds can be lower than the state of the art $O(k^2\Delta)$ of~\cite{Komusuewicz:Enumerating:2020}, under some specific assumptions on the value of $k$. 

Compared to the state of the art, our algorithms \changed{presented} in this paper belong to the class of enumeration algorithms \changed{ ``for a given $k$'' }, \changed{with} no limitations on the value of $k$.
Our algorithms improve the current state of the art from roughly $O(k^2 \Delta)$ \changed{amortized time} to $O(k^2)$ per $k$-graphlet or $O(k)$ per edge $k$-graphlet, thus eliminating a troublesome theoretical dependency on $\Delta$. For the sake of completeness, there also exist non-combinatorial algorithms that estimate $k$-graphlet counts in a graph using deep learning techniques \cite{liu2021learning}, but it is difficult to analyze their \changed{amortized time complexity}.

\section{Preliminaries}
In this section we give the necessary notation and terminology about graphs.
In addition, we give an overview of a technique called \emph{binary partition} to design efficient enumeration algorithms.
In \Cref{subsec:bp}, we also show a technique for analyzing the amortized time complexity of enumeration algorithms called \emph{push-out amortization}~\cite{DBLP:conf/wads/Uno15}.

\subsection{Notation}
Let $G = (V, E)$ be an undirected graph. Unless stated otherwise, graphs have no self-loops or parallel edges. 
We may use notations such as $E(G)$ or $V(G)$ to refer to a set of edges or vertices, respectively, for a given graph instance $G$. 
We also use shorthands for $|V|$ and $|E|$ such as $n$ and $m$, respectively.
For an edge $e = \set{u, v}$, we call $u$ and $v$ \emph{endpoints of $e$}.
For a vertex $v$, an edge $e$ is an \emph{incident edge of $v$} if $v$ is an endpoint of $e$.
We denote the set of incident edges of $v$ in $G$ as $\Gamma_G(v)$.
The value $\size{\Gamma_G(v)}$ is called the \emph{degree of $v$} and denote it as $d_G(v)$.
Moreover, $\max_{v \in V}d_G(v)$ is the \emph{maximum degree of $G$} and denote it as $\Delta$.
We drop the subscript when it can be \changed{deduced} to from the context.
A vertex $v$ is a \emph{neighbor of $u$} if $E$ has an edge $\set{u, v}$.
The set of neighbors of $v$ is the \emph{open neighborhood of $v$} and denote it $N(v)$, and
we define the open neighborhood of a set of vertices $U$ as $\bigcup_{u \in U}N(u) \setminus U$.
Similarly, we define the \emph{closed neighborhood of $v$ and $U$} as $N[v] \coloneqq N(v) \cup \set{v}$ and $N[U] \coloneqq \bigcup_{u \in U}N(u) \cup U$. 
\changed{This reasoning can be extended also to a set of edges $I$: in particular, given $I \subseteq E$, we define $\Gamma(I)$ as the set of edges incident to at least one of the endpoints of an edge $e \in I$, i.e. $\Gamma(I) = \inset{e = (u, v) \in E}{\exists e' = (u', v') \in I : u = u' \vee u = v' \vee v = u' \vee v = v' }$.}

A vertex sequence $P \coloneqq (v_1, \ldots, v_\ell)$ is a \emph{path} if for any $1 \le i < \ell$, $G$ has an edge $\set{v_{i}, v_{i+1}}$ and $v_i$ and $v_j$ are distinct for any $1\le i < j \le \ell$.
Vertices $v_1$ and $v_\ell$ \changed{are} \emph{endpoints of $P$} and $P$ is also called a $v_1$-$v_\ell$ path.
The \emph{length} of path $P$ is the length of the sequence minus one\changed{, i.e., the number of edges traversed}.
The \emph{distance} between vertices $u$ and $v$ is the length of a shortest $u$-$v$ path.
A graph $G$ is \emph{connected} if, for any pair of vertices $u$ and $v$, $G$ has a $u$-$v$ path.
\changed{
Moreover, we define $dist(X, e)$, the distance of an edge $e = (u, v)$ from a set of vertices $X$, as the length of a shortest path from either $u$ or $v$, depending on which one is closer, to the closest vertex belonging to $X$.
In this paper we assume that the graph $G$ is connected. If $G$ is not connected, it suffices to consider each connected component of $G$ separately.}
A sequence $C \coloneqq (v_1, \ldots, v_\ell, v_1)$ is a \emph{cycle}
if the sequence $(v_1, \ldots, v_\ell)$ is a path and $G$ has an edge $\set{v_1, v_\ell}$.
For $G$ and a subset of vertices $U \subseteq V$,
a graph $H = (U, F)$ is called a \emph{vertex-induced subgraph} if $F \coloneqq \inset{\set{u, v} \in E}{u,v \in V}$.
Similarly, 
for $G$ and a subset of edges $F \subseteq E$,
a graph $H = (U, F)$ is called an \emph{edge-induced subgraph} or simple \emph{subgraph} if $U \coloneqq \bigcup_{e \in F}e$.
We denote a vertex- or edge-induced subgraph induced by $X$ as $G[X]$. 
\changed{Moreover, we denote a vertex- or edge-induced subgraph induced by $V \setminus X$ or $E \setminus X$ as $G - X$.
For a vertex $v$ (an edge $e$), we denote $G - \set{v}$ ($G - \set{e}$) as $G - v$ ($G - e$). We also define the \textit{contraction} of an edge $e = \{u,v\}$ as the process of removing one of its extremes, e.g., $v$, and inserting a new edge from $u$ to all the vertices in $\Gamma(v)$; we remark that contraction can sometimes create multi-edges.} 

We recall that in this paper, we call a connected vertex- or edge-induced subgraph with exactly $k$ vertices or edges a \emph{$k$-graphlet} or an \emph{edge $k$-graphlet}.
An edge subtree $T$ is a \emph{subtree of $G$} if $T$ has no cycles. If $T$ has exactly $k$ edges, we call $T$ a \emph{$k$-subtree} (see \Cref{fig:subgraph_example} for a visualization of these three concepts). 

\subsection{Basic techniques for efficient enumeration}\label{subsec:bp}

\changed{Let us introduce} in this section the \changed{main algorithmic techniques} used in the paper.

\emph{Binary Partition} (also \emph{branching} or \emph{flashlight} approach), is a basic technique that recursively splits the space of solutions $\mathcal S$ into two sets $\mathcal S'$ and $\mathcal S''$ such that $\mathcal S = \mathcal S' \cup \mathcal S''$.
Consider for example the enumeration of all $k$-graphlets in a graph $G = (V, E)$.
We first pick a vertex $v \in V$ and partition $\mathcal S$, the space of all $k$-graphlets of $G$, into two sets: one containing all solutions that include $v$, the other containing all solutions that do not include $v$.
If the aforementioned sets contain more than one solution, we recursively proceed until we can print a unique $k$-graphlet.
This process defines an implicit tree structure $\mathcal T = (\mathcal V, \mathcal E)$ that we call the \emph{enumeration tree} or recursion tree.
We will refer to a $X \in \mathcal V$ as a tree \emph{node} or an \emph{iteration}, and we call a node $Y \in \mathcal V$ a \emph{child} of $X$ if and only if there exists an arc $(X, Y) \in \mathcal E$. 
The set of all children of $X$ is called $ch(X)$.
\changed{The parent-child relation of $\mathcal T$ can be defined as follows. Let $X$ be a node of $\mathcal T$, corresponding to an instance of the problem on the graph $G_X = (V_X, E_X)$. 
The children nodes of $X$ either consider $v \in V$ as part of the solution, or remove it from the graph. 
Thus, the left child $X_l$ of $X$ obtains the same input graph $G_X$, but includes $v$ in all subsequent solutions (e.g. by keeping it in a special set $S$), while the right child obtains the input graph $G_{X_r} = (V_X \setminus \{v\}, \inset{(u, w) \in E_X}{u,w \in V_X \setminus \{v\}}$.
}
A key factor when designing an efficient enumeration algorithm based on this technique, is to ensure that each internal node in $\mathcal T$ has at least two children.
Suppose that any internal node $X \in \mathcal T$ can generate all its children in $T(X)$ time, and that each leaf node produces exactly one solution. 
When every internal node has exactly two children, the number of solutions, i.e. of leaves, linearly bounds the number of internal nodes, allowing us to conclude that the \emph{amortized time complexity} of this enumeration algorithm is $\order{\max_{X \in \mathcal T} T(X)}$.
Notice that if there exists an internal node with exactly one child in $\mathcal T$, the amortized time complexity of the algorithm worsens to $\order{d \cdot \max_{X \in \mathcal T} T(X)}$, where $d$ is the depth of $\mathcal T$.

This technique has been widely used in the literature for the enumeration of structures like $s$-$t$ paths, cycles, minimal $s$-$t$ cuts, spanning trees, induced matchings, induced bipartite subgraphs, induced $k$-degenerate subgraphs, independent sets, and more~\cite{Read:Tarjan:Networks:1975,Birmele:Optimal:2013,Tsukiyama:Shirakawa:JACM:1980,Wasa:COCOON:2018,Kazuhiro:IEICE:2018,10.1007/978-3-030-04651-4_3,DBLP:journals/corr/abs-1906-09680}.
In this paper, we will use this strategy for designing the algorithm of \Cref{sec:graphlet}.

We now introduce another technique that can be used for the time complexity analysis of binary partition algorithms called \emph{push-out amortization}~\cite{DBLP:conf/wads/Uno15}.
The idea of this kind of analysis is to speed up the computation of the nodes in the bottom level of the enumeration tree. 
In fact, as each internal node has at least two children, the number of nodes in the bottom part is exponentially larger than that of nodes in top part.
Therefore, by speeding up the lowest nodes, we expect to exponentially improve the time complexity of the whole algorithm. 
With this idea in mind, one has to design an algorithm respecting the following conditions: (1) the overall time complexity is dominated by the complexity of leaf nodes, and {(2)\label{po-condition_2}} leaves have ``small'' time complexity.
For reference, the independent set enumeration problem can also be solved efficiently by push-out amortization, as well as other enumeration problems \cite{DBLP:journals/corr/abs-1906-09680, DBLP:conf/wads/Uno15}.

In order to prove that the total time of the bottom levels of an enumeration tree dominate the total running time of an algorithm, it is important to ensure that the total time of children nodes is somewhat \emph{larger} than that of the parent node.
However, at the same time we must also ensure that the running time of leaf nodes is kept small, and satisfying both of these conditions is not always easy.
This is because usually when designing an algorithm in a way that the time complexity of a child node is large, it is difficult to ensure condition (2). 
On the other hand, when we design the leaves with a small time complexity, it becomes difficult to ensure condition (1).
Uno \cite{DBLP:conf/wads/Uno15} established 
a sufficient condition to solve the above situation called the \emph{push-out (PO) condition}.

\begin{theorem}[The PO condition~\cite{DBLP:conf/wads/Uno15}]\label{thm:push-out}
    If, for every internal node $X$ of $\mathcal T$, it holds that
    \[
        \Bar{T}(X)=\sum_{Y \in ch(X)}T(Y) \quad \ge \quad \alpha T(X) - \beta(|ch(X)| + 1)T^*,
    \]
    where $\alpha > 1$ and $\beta \ge 0$ are constants and $T^*$ is the worst case running time of processing any leaf node of $\mathcal T$,
    then, the amortized time of each node $X$ in $\mathcal T$ is $O(T^*)$.
\end{theorem}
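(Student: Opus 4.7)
The plan is to use the PO condition as a charging scheme that pushes the cost of every internal node onto its children, exploiting the fact that $\alpha>1$ makes this push-forward contractive. Rearranging the assumed inequality yields
\[
    T(X)\ \le\ \tfrac{1}{\alpha}\,\Bar{T}(X) \ +\ \tfrac{\beta(|ch(X)|+1)}{\alpha}\,T^*
\]
for every internal node $X$. Summing this bound over all internal nodes, and observing that the double sum $\sum_{X \text{ internal}}\sum_{Y \in ch(X)} T(Y)$ telescopes to $\sum_{Y \neq \mathrm{root}} T(Y)$ because each non-root node is the child of a unique parent, turns the whole family of local inequalities into a single recursive inequality linking the total internal cost to the global total $\mathrm{Total}:=\sum_{X}T(X)$.

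Next I would use the trivial bound $T(X)\le T^*$ at leaves to write $\mathrm{Total} \le \sum_{X \text{ internal}} T(X) + L\cdot T^*$, where $L$ is the number of leaves, and substitute this into the previous inequality. Since $\alpha>1$, after rearrangement the coefficient $1-1/\alpha$ of $\mathrm{Total}$ is strictly positive, so one can solve algebraically for $\mathrm{Total}$ and obtain a closed-form bound of the shape
\[
    \mathrm{Total}\ \le\ \frac{\alpha + c\beta}{\alpha - 1}\cdot L\,T^*,
\]
for some constant $c$, provided that the aggregated penalty $\sum_{X \text{ internal}}\beta(|ch(X)|+1)T^*$ can itself be bounded by $O(L\,T^*)$.

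Establishing that latter bound is where the earlier structural assumption that every internal node of $\mathcal T$ has at least two children enters: under that assumption an easy induction shows that the number $p$ of internal nodes is at most $L-1$, so the total number of nodes $n$ is at most $2L-1$; combined with the tree identity $\sum_{X \text{ internal}} |ch(X)| = n-1$, this yields $\sum_{X \text{ internal}}(|ch(X)|+1) = (n-1)+p \le 3L$. Plugging this estimate in gives $\mathrm{Total} = O(L\,T^*)$, and since $n = \Theta(L)$, dividing by $n$ produces the claimed amortized cost of $O(T^*)$ per node.

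The main obstacle, in my view, is not any single algebraic step but recognizing that the seemingly benign additive term $\beta(|ch(X)|+1)T^*$ must be telescoped carefully: without the ``each internal node has at least two children'' invariant, that penalty could aggregate to something much larger than $O(L\,T^*)$, and the contractive factor $1/\alpha$ would no longer be able to absorb it. Once this structural accounting is in place, the remainder is a clean geometric-series style argument made finite precisely because $\alpha > 1$.
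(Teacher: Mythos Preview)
The paper does not give its own proof of this theorem: it is quoted verbatim as a result of Uno~\cite{DBLP:conf/wads/Uno15}, so there is nothing in the paper to compare your argument against.

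That said, your argument is correct. The rearranged inequality, the telescoping identity
\[
  \sum_{X\ \text{internal}}\ \sum_{Y\in ch(X)} T(Y)\ =\ \sum_{Y\neq\text{root}} T(Y)\ \le\ \mathrm{Total},
\]
and the use of $\alpha>1$ to solve for $\mathrm{Total}$ are exactly the right ingredients, and they yield $\mathrm{Total}\le\frac{\alpha L+\beta P}{\alpha-1}\,T^*$ with $P=\sum_{X\ \text{internal}}(|ch(X)|+1)$.

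One simplification is worth noting. You identify the ``every internal node has at least two children'' hypothesis as the main obstacle, but for the theorem \emph{as stated} it is not needed: the claim is that the amortized time \emph{per node} is $O(T^*)$, so it suffices to bound $\mathrm{Total}$ by $O(n\,T^*)$. The trivial estimates $L\le n$ and $P=(n-1)+p\le 2n$ already give
\[
  \mathrm{Total}\ \le\ \frac{(\alpha+2\beta)\,n}{\alpha-1}\,T^*,
\]
with no structural assumption on $\mathcal T$. The two-children invariant is what the paper subsequently uses to relate $n$ to the number of \emph{solutions} (leaves), but that lies outside the statement you were asked to prove.
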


Suppose, for example, that for every leaf node of $\mathcal T$, the algorithm outputs a distinct solution (we assume, for the sake of convenience, that one solution corresponds to exactly one leaf and vice versa). 
If every internal node of $\mathcal T$ has at least two children and $T^*$ is upper bounded by a constant, then the algorithm runs in amortized constant time per solution.
The algorithm presented in \Cref{sec:es} is designed according to the push-out amortization strategy.

\changed{Finally, a note on data structures is in order: we are describing recursive algorithms that make changes to a graph as they proceed. Since creating copies of the graph for children recursive calls would dominate the complexity, we keep a single dynamic copy of the graph and edit it as necessary, undoing the changes as we backtrack.}

\changed{We assume using a graph data structure that allows the key operation of
\begin{itemize}
    \item Iterating over the neighbors of a node $v$ in $O(d_G(v))$ time.
    \item Deleting or inserting an edge in $O(1)$ time.
    \item Deleting or inserting a vertex $v$ in $O(d_G(v))$ time.
    \item Contracting an edge $e = \{u,v\}$ in either $O(d_G(u))$ or $O(d_G(v))$ time.
\end{itemize}
In particular observe how the first two items are the essential ones, as we can delete a vertex $v$ in $O(d_G(v))$ time by deleting its incident edges one by one, and we can contract $e = \{u,v\}$ in $O(d_G(u))$ time by removing every edge $\{u,x\}$ incident to $u$ and replacing it with the edge $\{v,x\}$.
This type of structure is typically considered folklore knowledge, but more details on how to maintain it can be found, e.g., in \cite{wasa2012constant}.
}

\subsection{\changed{Obtaining Delay Bounds}}
\changed{As pointed out earlier, it is possible to transform these amortized time bounds into delay. Let us briefly observe how.
In particular, we can use the \emph{Output Queue Method} \cite{uno2003two}.
It consists in creating a bounded queue $Q$ of length $\left\lceil 2T^*/\Bar{T}\right\rceil+1$, where $T^*$ is the maximum delay, i.e. the cost of all the ancestors of a leaf iteration in the recursion algorithm that outputs a solution, and $\Bar{T}$ the average time between two outputs (i.e. the amortized cost).
During the execution of our algorithm, instead of outputting a solution directly, we place the solution into the end of $Q$. 
If $Q$ is full, then we pop the front and output that solution.
If we reach the end of the algorithm and $Q$ is not empty, output all the remaining solutions.
By using this technique we can regularize the delay and make sure to output every a solution every $\Bar{T}$ time. In particular, we can obtain $O(k^2)$ and $O(k)$ delay respectively for $k$-graphlet enumeration and edge $k$-graphlet enumeration. 
We can then proceed to give amortized bounds on our algorithms, knowing that they can always be translated into delays with the Output Queue Method.
For more details on this technique, we refer the reader to \cite{uno2003two}.
In what follows, we will not explicitly adopt this technique in the pseudocode in order not to further complicate the analysis of our algorithms, but we note that the regularized delay is achievable through small modifications to each algorithm.
}

\section{Enumeration of \texorpdfstring{$k$}{k}-graphlets}\label{sec:graphlet}

In this section we describe an algorithm for efficient $k$-graphlet enumeration in amortized $O(k^2)$-time, based on binary partition. 
We start by discussing an algorithm for the enumeration of $k$-graphlets containing a given vertex $r$, then we expand the strategy to obtain all the $k$-graphlets of $G$.

\subsection{Enumeration of all \texorpdfstring{$k$}{k}-graphlets with a specific vertex}\label{subsec:graphlets-simple-binary-partition}

Let $r \in V$ be an arbitrary vertex of $G = (V, E)$, $\In$ and $\Ex$ be two disjoint vertex sets, and $\mathcal S(G, \In, \Ex, k)$ be the set of all $k$-graphlets of $G$ that contain all and only the vertices in the set $\In$, and do not contain any vertex in $\Ex$.
For brevity, when $G$ is clear from the context, we denote this set as $\mathcal S(\In, \Ex, k)$.
Since we are interested in $k$-graphlets containing the specified vertex $r$, in the rest of this section we always assume that $r \in \In$. 
The basic strategy consists of partitioning the set $\mathcal S(\In, \Ex, k)$ into two sets $\mathcal S(\In \cup \set{v}, \Ex, k)$ and $\mathcal S(\In, \Ex \cup \set{v}, k)$, given a vertex $v \in N(\In) \setminus \Ex$.
Then, we check if these sets contain at least one solution by choosing an arbitrary vertex from $\In$, running a graph traversal from it, excluding vertices in $\Ex$, and checking if we reach $k - \size{\In}$ vertices to be added to $\In$.
This strategy ensures that $G[\In]$ is connected, as it is connected at first (initially $\In = \{r\}$) and we add neighbors of $\In$ at each step. 

Therefore, in what follows, we assume that $G[\In]$ is connected, and we also omit the set $\Ex$ from the notation using this simplifying observation.

\begin{observation}
    $\mathcal S(G, \In, \Ex, k) = \mathcal S(C_\In, \In, \emptyset, k)$, 
    where $C_\In$ is the connected component in $G - \Ex$ that contains all the vertices in $\In$.
\end{observation}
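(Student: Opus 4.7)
The plan is to prove the set equality $\mathcal S(G, \In, \Ex, k) = \mathcal S(C_\In, \In, \emptyset, k)$ by double inclusion. First I would record the standing hypotheses that the argument rests on: $\In$ and $\Ex$ are disjoint, and $G[\In]$ is connected, so all of $\In$ lies inside a single connected component of $G-\Ex$, which is by definition $C_\In$. In particular $V(C_\In)\cap \Ex = \emptyset$, and since $C_\In$ is a vertex-induced subgraph of $G-\Ex$ which is itself vertex-induced in $G$, the key routine fact is that for every $U\subseteq V(C_\In)$ one has $G[U] = (G-\Ex)[U] = C_\In[U]$. This invariance of the induced-subgraph operation under restriction is what lets us freely switch the ambient graph between $G$, $G-\Ex$, and $C_\In$ without changing which $k$-vertex induced subgraphs appear.

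For the forward inclusion, take any $H\in \mathcal S(G,\In,\Ex,k)$. By definition $H$ is a connected induced subgraph of $G$ with $|V(H)|=k$, $\In\subseteq V(H)$, and $V(H)\cap\Ex=\emptyset$. The last condition means $H$ is also a subgraph of $G-\Ex$, and since $H$ is connected and meets $C_\In$ (because $\In\subseteq V(H)\cap V(C_\In)$), the whole of $V(H)$ must lie in $C_\In$. Applying the induced-subgraph identity with $U=V(H)$ gives $H = C_\In[V(H)]$, so $H\in \mathcal S(C_\In,\In,\emptyset,k)$.

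For the reverse inclusion, take $H\in \mathcal S(C_\In,\In,\emptyset,k)$: a connected induced subgraph of $C_\In$ on $k$ vertices containing $\In$. Because $V(H)\subseteq V(C_\In)$ and $V(C_\In)\cap\Ex=\emptyset$, we get $V(H)\cap\Ex=\emptyset$, and the same identity yields $G[V(H)] = C_\In[V(H)] = H$. Hence $H$ is a connected induced subgraph of $G$ with the required properties, i.e.\ $H\in \mathcal S(G,\In,\Ex,k)$.

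I do not anticipate any real obstacle: the observation is essentially a bookkeeping statement. The only point I would emphasize in the write-up is precisely the invariance of the vertex-induced operation under restriction to a subset of vertices of a vertex-induced subgraph, since this is what justifies dropping $\Ex$ from the parameter list once we have passed to $C_\In$.
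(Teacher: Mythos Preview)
Your proposal is correct. The paper states this as an observation with no accompanying proof, so your double-inclusion argument---resting on the fact that for $U\subseteq V(C_\In)$ one has $G[U]=C_\In[U]$, together with the connectedness of $H$ forcing $V(H)\subseteq V(C_\In)$---is exactly the natural justification and there is nothing to compare against.
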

\changed{In what follows, unless otherwise specified, we assume that the graph is stored using adjacency lists, providing $O(d(v))$-time deletion/restore of any vertex $v \in V$. 
In the algorithms we represent a graphlet implicitly, by keeping the set of its vertices.
Any induced subgraph $G[S]$ is not explicitly represented, and we can traverse it by starting from a vertex in the corresponding set $S$.}

\subsubsection{Amortized linear-time enumeration}

First, we give an amortized $\order{m}$-time enumeration algorithm that will be used as a subroutine in the subsequent $\order{k^2}$ amortized time algorithm. 

It can be easily shown that the \changed{simple binary partition algorithm discussed in Section \ref{subsec:graphlets-simple-binary-partition}} runs in amortized $\order{nm}$ time, since each recursive call takes $\order{m}$ time and the depth of $\mathcal T$ is at most $n$.
In this approach, the number of tree nodes in $\mathcal T$ is $\order{nN}$, where $N$ is the number of leaves, that is, the number of $k$-graphlets.
Therefore, the amortized time complexity is $\order{nm}$.
To improve this time complexity, we first reduce the number of nodes in $\mathcal T$ without worsening the time complexity of each node and, in particular, we ensure that each node produces exactly two children.
To this end, we detect a vertex $v$ such that $\mathcal S(\In, \Ex \cup \set{v}, k)$ has no solution.

A vertex $v$ is \emph{mandatory} for $G$, $\In$ and $k$ if the connected component in $G - v$ that contains the vertices in $\In$ has at most $k-1$ vertices.
Notice that a vertex $v$ is mandatory if and only if $\mathcal S(G, \In, k) = \mathcal S(G, \In \cup \set{v}, k)$.
In addition, a mandatory vertex is an articulation point if the connected component in $G$ that contains $\In$ has at least $k+1$ vertices.

In our partitioning strategy, the basic operation is to determine if $\mathcal S(G, \In, k)$ is non-empty, and if a vertex $v$ is mandatory or not.
We show that both operations can be done in $O(\min\set{m, k^2, k\Delta})$ time.

\begin{lemma}\label{lemma:removable_check_time}
    Let $G$ be a graph, $k$ be an integer, and \changed{$\In \neq \emptyset$} a set of vertices such that $G[\In]$ is connected.
    We can determine if $\mathcal S(G, \In, k)$ is empty or not in $O(\min\set{m, k^2, k\Delta})$ time.    
    Moreover, for a vertex $v$, we can check whether $v$ is mandatory or not with the same time complexity.
\end{lemma}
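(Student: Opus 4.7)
The plan is to perform a truncated breadth-first search (BFS) from $\In$ in $G$, and analogously in $G - v$ for the mandatory check. We maintain a set $D$ of discovered vertices, initialized to $\In$, together with a queue of vertices to process; when a vertex $u$ is dequeued, we scan its adjacency list and insert into both $D$ and the queue each neighbor not already in $D$. The BFS halts as soon as $\size{D}$ reaches $k$, certifying $\mathcal S(G, \In, k) \ne \emptyset$ (since $D$ remains connected throughout, $G[D]$ contains a connected $k$-vertex subgraph including $\In$), or as soon as the queue empties with $\size{D} < k$, certifying $\mathcal S(G, \In, k) = \emptyset$.

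The bounds $O(m)$ and $O(k\Delta)$ are immediate: BFS examines each edge of $G$ at most twice, and at most $k$ vertices are ever processed (since $\size{D} \le k$), each scanning at most $\Delta$ incident edges.

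The $O(k^2)$ bound is the main point of the argument, as it cannot be read off from the degrees. The key invariant is that \emph{every edge examined during the BFS belongs to $G[D_{\mathrm{final}}]$}, where $D_{\mathrm{final}}$ denotes the discovered set at termination. Indeed, when we examine an edge $\set{u, w}$ while processing $u \in D$, either $w$ is already in $D$, or $w \notin D$ and we immediately insert $w$ into $D$; in both cases $\set{u, w}$ has both endpoints in $D_{\mathrm{final}}$. Since $\size{D_{\mathrm{final}}} \le k$, we have $\size{E(G[D_{\mathrm{final}}])} \le \binom{k}{2}$, and each such edge is examined at most twice along the BFS, giving $O(k^2)$ total work.

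Finally, $v$ is mandatory precisely when the connected component of $\In$ in $G - v$ contains fewer than $k$ vertices, which is the very same reachability question applied to $G - v$. Running the identical BFS on $G - v$ (equivalently, skipping $v$ whenever it appears as a neighbor in an adjacency list) inherits the bound $O(\min\set{m, k^2, k\Delta})$.
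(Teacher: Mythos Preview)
Your proof is correct and follows essentially the same approach as the paper: a truncated BFS that stops at $k$ discovered vertices, together with the observation that every examined edge has both endpoints in the final discovered set $D_{\mathrm{final}}$ of size at most $k$. The paper phrases this identically (calling the sets $U$ and $F$); your explicit statement of the invariant is in fact slightly cleaner, and the mandatory check via BFS on $G-v$ matches the paper's ``ignore $v$'' remark, the only negligible difference being that the paper notes the at-most-$k$ extra edge examinations incident to $v$ when one skips it in the adjacency lists.
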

\begin{proof}
    We show that it can be done with a standard graph search technique, the breadth-first search.
    If the connected component in $G$ that contains $\In$ has at least $k$ vertices, 
    $\mathcal S(G, \In, k)$ is non-empty, and this can be checked in $\order{m}$ time.
    To improve this time complexity, we terminate the breadth-first search as soon as we find $k$ vertices.
    Let $U$ and $F$ be two sets of vertices and edges respectively, explored by the breadth-first search.
    Since the breadth-first search terminates when we find $k$ vertices, each edge $e \in F$ connects vertices in $U$.
    Therefore, $\size{F}$ is bounded by $\min\set{k^2, k\Delta}$ and the breadth-first search terminates in $\order{\min\set{m, k^2, k\Delta}}$ time.
    Moreover, even if we ignore $v$, the number of edges in $F$ increases at most by $k$, thus
    we can check whether $v$ is mandatory or not with the same time complexity.
\end{proof}

We are now ready to describe a linear amortized time algorithm for $k$-graphlet enumeration, whose pseudocode is shown in \Cref{alg:linear_enum_graphlet}.
We first prove its correctness, then we discuss its computational complexity.

\begin{algorithm}[tb]
  \DontPrintSemicolon
  \KwIn{A graph $G$, a set of vertices $\In$, an integer $k$}
  \KwOut{All $k$-graphlets that contain $\In$ in $G$}
  \SetKwProg{myproc}{Function}{}{}
  \SetKwFunction{enum}{LIN-ENUM-V}

  \myproc{\enum{$G$, $\In$, $k$}}{%
      \lIf{$\size{\In} = k$}{\textbf{output} $\In$ \textbf{and} \Return}
      \While{$N(\In)$ has a mandatory vertex $u$}{\label{alg:while:lin}
         $\In \gets \In \cup \{u\}$\;
         \lIf{$\size{\In} = k$}{%
           Output $\In$ and \Return
        }%
      }
      Let $v$ be a vertex in $N(\In)$ \tcp*[r]{$v$ is non-mandatory} 
      \enum{$G, \In \cup \set{v}, \emptyset, k$}\;
      \enum{$G - v, \In, k$}\;
  }%
  \caption{An amortized $O(m)$-time algorithm for $k$-graphlet enumeration}
  \label{alg:linear_enum_graphlet}
\end{algorithm}

\begin{theorem}\label{th:correctness_linear_enum}
    \Cref{alg:linear_enum_graphlet} enumerates all $k$-graphlets in $G$ \changed{that contain $\In \neq \emptyset$}.
\end{theorem}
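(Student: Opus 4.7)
The plan is to argue correctness by induction on $k - |\In|$ along the recursion tree. I would begin by recording a connectivity invariant: $G[\In]$ is connected at every call to the procedure. This holds at the top-level call (where $\In$ is assumed connected, e.g., $\In = \{r\}$), is preserved when a vertex $u \in N(\In)$ is promoted into $\In$ inside the while-loop (the edge from $u$ to $\In$ is retained), and is preserved when a vertex $v \notin \In$ is removed from $G$ before the second recursive call (the induced subgraph $G[\In]$ is untouched). Combined with the base case $|\In| = k$, this immediately yields \emph{soundness}: every outputted $\In$ is a connected induced $k$-vertex subgraph, i.e., a $k$-graphlet containing the initial $\In$.

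For \emph{completeness}, the key structural lemma is that the while-loop on line~\ref{alg:while:lin} never discards a solution: if $u \in N(\In)$ is mandatory for $(G, \In, k)$, then every $k$-graphlet $S$ with $\In \subseteq S \subseteq V(G)$ must satisfy $u \in S$. Indeed, if $u \notin S$ then the connected vertex set $S$ lies entirely within the connected component $C$ of $G - u$ that contains $\In$; but by the definition of mandatory $|C| \le k-1$, contradicting $|S| = k$. Hence $\mathcal S(G, \In, k) = \mathcal S(G, \In \cup \{u\}, k)$, and iterating the promotion preserves the solution set.

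For the branching step, let $v \in N(\In)$ be the non-mandatory vertex selected. Every $S \in \mathcal S(G, \In, k)$ either contains $v$, in which case $S \in \mathcal S(G, \In \cup \{v\}, k)$ and it is enumerated by the first recursive call (whose connectivity precondition holds because $v \in N(\In)$), or it omits $v$, in which case $S \subseteq V(G) \setminus \{v\}$ and $S \in \mathcal S(G - v, \In, k)$, enumerated by the second recursive call (whose precondition holds because $v \notin \In$). The two subcases are disjoint by the inclusion status of $v$, so by the inductive hypothesis each solution of the parent call is produced by exactly one descendant call, giving both completeness and the absence of duplicates.

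The main subtlety I foresee is the degenerate configuration where $|\In| < k$ but $N(\In) = \emptyset$: the line ``Let $v$ be a vertex in $N(\In)$'' is then undefined, even though in that case $\mathcal S(G, \In, k) = \emptyset$ and nothing should be output. I expect to handle this by reading an implicit emptiness guard into the pseudocode at the start of each call — precisely the test whose cost is bounded by \Cref{lemma:removable_check_time} — so that a call with no feasible solution simply returns. With this convention, the connectivity invariant, the mandatory-vertex lemma, and the disjoint branching partition combine via induction on $k - |\In|$ to yield the theorem.
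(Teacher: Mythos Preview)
Your argument is essentially the paper's: preserve connectivity of $G[\In]$, use the mandatory-vertex definition to justify the while-loop promotions, and then split $\mathcal S(G,\In,k)$ into the two disjoint subfamilies handled by the two recursive calls. Your write-up is in fact more explicit than the paper's own proof (you separate soundness and completeness, and you flag the $N(\In)=\emptyset$ edge case, which the paper handles implicitly via the non-emptiness precondition on $\mathcal S(G,\In,k)$).

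One small technical slip to fix before you finalize: the induction parameter $k-|\In|$ does \emph{not} decrease in the second recursive call $\texttt{LIN-ENUM-V}(G-v,\In,k)$, so your inductive hypothesis is unavailable there. Use instead $|V(G)|-|\In|$ (which drops by one in both branches), or simply do structural induction on the recursion tree as the paper implicitly does. With that change, the proof goes through.
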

\begin{proof}
    Let $X$ be a node in $\mathcal T$, $G_X$ be the input graph of the call $X$, $\In_X$ be a set of vertices such that $G_X[\In_X]$ is connected.
    We show that each node in $\mathcal T$ partitions 
    $\mathcal S(G_X, \In_X, k)$ into two non-empty sets
    $\mathcal S(G_X, \In_X \cup \set{v}, k)$ and 
    $\mathcal S(G_X - v, \In, k)$.
    By definition, adding mandatory vertices in $\In_X$ does not change the set of $k$-graphlets.
    Moreover, since we add mandatory vertices that are adjacent to a vertex in $\In_X$, we do not break the connectivity of $G_X[\In_X]$. 
    Let $S$ be a $k$-graphlet in $\mathcal S(G_X, \In_X, k)$: if $S$ contains $v$, $S \in \mathcal S(G_X, \In_X \cup \set{v}, k)$,
    otherwise $S \in \mathcal S(G_X - v, \In_X, k)$.
    From the definition of $\mathcal S(G_X, \In_X, k)$, 
    $\mathcal S(G_X, \In_X \cup \set{v}, k)$ and 
    $\mathcal S(G_X - v, \In_X,  k)$ are disjoint.
    Therefore, in each node $X$ of $\mathcal T$, we can partition the set of $k$-graphlets and \Cref{alg:linear_enum_graphlet} output all $k$-graphlets containing $\In$.
\end{proof}

Next, we analyze the time complexity of \Cref{alg:linear_enum_graphlet}.

\begin{theorem}\label{th:linear_graphlet_enum_delay}
    \Cref{alg:linear_enum_graphlet} runs in $O(n + N\cdot(m + 1))$ time and $\order{n + m}$ space, where $N$ is the number of $k$-graphlets in $G$.
\end{theorem}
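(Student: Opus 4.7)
The plan is to combine a combinatorial bound on the size of the recursion tree $\mathcal T$ with a bound on the work done at each node of $\mathcal T$. Concretely, I will show (a) $\mathcal T$ has at most $2N-1$ nodes, and (b) each node is processed in $O(m)$ time, so that the total cost is $O(n + N\cdot(m+1))$ after adding the output cost.

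For (a), I will argue that every internal node of $\mathcal T$ has exactly two children. Correctness (\Cref{th:correctness_linear_enum}) already tells us that distinct leaves produce distinct $k$-graphlets, so there are exactly $N$ leaves. For an internal node with current set $\In$, the while loop exits only when no vertex in $N(\In)$ is mandatory, so the branching vertex $v$ chosen afterwards is non-mandatory. By the characterization of non-mandatoriness, the component of $G-v$ containing $\In$ still has at least $k$ vertices, hence $\mathcal S(G-v, \In, k)\neq\emptyset$; and since $\In\cup\{v\}$ is connected and contained in a component of size at least $k$, $\mathcal S(G, \In\cup\{v\}, k)\neq\emptyset$ as well. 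Thus both recursive subtrees contribute at least one leaf, every internal node is a true binary node, and a binary tree with $N$ leaves has $2N-1$ nodes in total.

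For (b), I will charge $O(m)$ to each node. \Cref{lemma:removable_check_time} already gives $O(m)$-time tests for emptiness and for whether a specified vertex is mandatory, each implementable by breadth-first search on the current component. The nontrivial ingredient is the while loop: I will implement it by a single linear-time traversal of the component of $\In$ that computes its articulation points (à la Tarjan) together with the sizes of the pieces their removal induces; from this data one reads off the set of currently mandatory neighbors of $\In$ in one sweep, moves them into $\In$, and re-runs the sweep only when the growth of $\In$ makes new candidates appear. This keeps the amortized cost of the entire while-loop at a single node $O(m)$, so the total per-node work, together with selecting $v$ and launching the two recursive calls, is $O(m)$.

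Summing, the total running time is at most $O(m)\cdot O(N) = O(Nm)$ for the internal work, plus $O(k) = O(1)$ per leaf for emitting the output $\In$ (accounting for the ``$+1$'' in $N\cdot(m+1)$), plus an additive $O(n)$ for reading the graph and handling the $N=0$ corner case. The recursion depth is at most $k\le n$, each frame stores only pointers into the single $O(n+m)$-sized adjacency representation of $G$ (the sets $\In$ and implicit $\Ex$ can be maintained by labels that are updated on entry and restored on return), giving space $O(n+m)$. The main obstacle I anticipate is exactly the amortized $O(m)$ bound for the while loop: a naive implementation that tests vertices of $N(\In)$ one at a time with \Cref{lemma:removable_check_time} would blow up to $O(k^2 m)$ per node, so the linear-time bulk detection of mandatory vertices is where the argument has to do real work.
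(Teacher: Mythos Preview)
Your proposal is correct and follows essentially the same route as the paper: bound the number of nodes of $\mathcal T$ by showing every internal node has two nonempty children (via the non-mandatoriness of the branching vertex), and bound the per-node work at $O(m)$ by computing all mandatory vertices in one shot with Tarjan's articulation-point algorithm and the block-cut tree rather than testing candidates one by one. The space argument via storing only differences also matches.
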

\begin{proof}
    We first consider the time complexity of each node $X$, i.e., of a single recursive call.
    First, we enumerate all articulation points in 
    $O(n + m)$ time using Tarjan's articulation point enumeration algorithm~\cite{DBLP:journals/ipl/Tarjan74, tarjan_articulation_points}.
    Using the block-cut tree obtained from Tarjan's algorithm, we can obtain all mandatory vertices in $\order{n + m}$ time; thus, the while loop on line~\ref{alg:while:lin} takes $\order{n + m}$ time.
    Since the other operations can be done trivially in $\order{n + m}$ time, the time complexity of $X$ is $\order{n + m}$.
    Now we consider the total time complexity of \Cref{alg:linear_enum_graphlet} by analyzing the relation between the number of solutions and the number of nodes of $\mathcal T$.
    Since each leaf of $\mathcal T$ outputs one solution and each internal node of $\mathcal T$ has at least two children,
    the number of leaves of $\mathcal T$ is bounded by $\order{N}$.
    Therefore, the total time complexity of this algorithm is $\order{n + N(m + 1)}$.
    Regarding the space complexity,
    in each node $X$ of $\mathcal T$, we store only the difference of $\In_X$ and $G_X$, achieving a
    total space complexity of $\order{n + m}$.
\end{proof}
\subsubsection{Amortized \texorpdfstring{$\order{k^2}$}{O(k2)}-time enumeration}
We can now introduce our amortized $\order{k^2}$-time $k$-graphlet enumeration algorithm.
The basic strategy of this algorithm is the same as \Cref{alg:linear_enum_graphlet}, fine-tuned in order to achieve amortized $\order{k^2}$-time.
The enumeration of mandatory vertices demands $\order{m}$ time if done in a naive way, using a block-cut tree and a linear-time articulation point enumeration algorithm~\cite{tarjan_articulation_points}: this may become the bottleneck of the enumeration process.
To overcome this difficulty, we use \Cref{lemma:removable_check_time}.
Using this Lemma, we can efficiently determine whether a vertex $v$ is mandatory or not.
If we can find a non-mandatory vertex, then we can generate two children instances in the corresponding recursion tree $\mathcal T$.
We now show that $N(\In)$ has at most \changed{one mandatory vertex} when $G$ is sufficiently large.

\begin{algorithm}[tb]
  \DontPrintSemicolon
  \KwIn{A graph $G$, an integer $k$, a \changed{connected} set of vertices $\In$}
  \KwOut{All $k$-graphlets of $G$ that contain $\In$}
  \SetKwProg{myproc}{Function}{}{}
  \SetKwFunction{enum}{ENUM-V}
  \SetKwFunction{removable}{REMOVABLE}
  \SetKwFunction{linearenum}{LIN-ENUM-V}
  \SetKwFunction{fruitful}{FRUITFUL}

  \myproc{\enum{$G$,$\In$,$k$}}{%
    \If{$\size{\In} = k - 1$}{
        \lFor{$u \in N(\In)$}{
            \textbf{output} $\In \cup\set{u}$\;
            \Return
        }
    }
    \While(\tcp*[f]{Follow the chain}){$\size{N(\In)} = 1$}{\label{alg:ksquare_graphlet:line:while}
      Let $v$ be the vertex in $N(\In)$\;
      $\In \gets \In \cup \{v\}$\;
      \If{$\size{\In} = k - 1$}{
        \lFor{$u \in N(\In)$}{
            \textbf{output} $\In \cup\set{u}$\;
            \Return
          }
      }
    }
    Let $x,y$ be two distinct vertices in $N(\In)$\;
    \uIf{either $x$ or $y$ is non-mandatory for $G, \In, k$}{
        Let $z$ be a non-mandatory vertex in $\set{x, y}$\;
      \enum{$G, \In \cup \{z\}, k$}\;\label{alg:ksqare_graphlet:line:take_x}
      \enum{$G - z, \In, k$}\;\label{alg:ksqare_graphlet:line:remove_x}
    }
    \Else(\tcp*[f]{$|V| < 2k$ by \Cref{lemma:g_2k_nodes}}){
      \linearenum{$G, \In, k$}\;
    }
  }
  \caption{An amortized $O(k^2)$-time $k$-graphlet enumeration algorithm.}\label{alg:ksquare_graphlet}
\end{algorithm}


\begin{lemma}\label{lemma:g_2k_nodes}
    Let \changed{$U \neq \emptyset$} be a set of vertices \changed{of a graph $G$} such that $G[U]$ is connected, \changed{$\size{U} \le k-2$, and $G$ has at least $k$ vertices.  Let $x$ and $y$ be distinct vertices in $N(U)$}.
    If $x$ and $y$ are both mandatory for $G$, $U$, $k$, then $n < 2k$.
\end{lemma}
\begin{proof}
    Since $x$ and $y$ are mandatory, both $x$ and $y$ must be articulation points in $G$, \changed{and they are in the connected component of the graph containing $U$ (because they are neighbors of vertices of $U$).}
    Let $T_C$ be a biconnected-component tree of $G$: \changed{this component must contain at least $k+1$ vertices, otherwise it would not contain any solution and we would skip it (see Algorithm~\ref{alg:graphlet_instances}).}
    We denote the connected component in $G - x$ that contains the vertices in $U$ as $T_C(x, U, 1)$  and
    the other connected components in $G-x$ as $T_C(x, U, 0)$.
    Notice that $x \in T_C(y, U, 1)$ and $y \in T_c(x, U, 1)$ by construction.
    Since both $x$ and $y$ are mandatory, 
    both $T_C(x, U, 1)$ and $T_C(y, U, 1)$ have at most $k - 1$ vertices.
    Moreover, any vertex in $G$ is contained either in $T_C(x, U, 1)$ or $T_C(y, U, 1)$, \changed{because $x$ and $y$ are both mandatory. As any shortest path from $x$ to $y$ can only go through vertices of $U$, because of the size of $G$, we obtain that the corresponding $T_C(x, U, 1)$ and $T_C(y, U, 1)$ are disjoint, otherwise $x$ and $y$ would not be mandatory.}
    Therefore, the number of vertices of $G$ in this case is at most 
    $\size{V(T_C(x, U, 1))} + \size{V(T_C(y, U, 1))} \le 2k - 2$.
\end{proof}


We can now prove the amortized time complexity of~\Cref{alg:ksquare_graphlet}.
We omit the proof of correctness for  \Cref{alg:ksquare_graphlet} as it follows the same arguments used in \Cref{th:correctness_linear_enum} \changed{, and from the binary partition structure of the algorithm (see also \cite{Komusuewicz:Enumerating:2020}).}

\begin{theorem}\label{th:ksquare-graphlet-complexity}
    \changed{\enum{$G$,$\In=\{v_i\}$,$k$} (\Cref{alg:ksquare_graphlet}) runs in $\order{N_i\cdot k^2}$ time and $\order{n + m}$ space, where $N_i$ is the number of $k$-graphlets that contain $\In=\{v_i\}$.}
\end{theorem}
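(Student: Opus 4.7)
The plan is to bound (i) the cost of one \texttt{ENUM-V} call (excluding recursive children and the \texttt{LIN-ENUM-V} branch), (ii) the number of nodes in the recursion tree $\mathcal T$, and (iii) the total cost of the \texttt{LIN-ENUM-V} invocations at the leaves, then sum. The main tools are \Cref{lemma:removable_check_time} (an $O(k^2)$-time mandatoriness test via a truncated BFS), \Cref{lemma:g_2k_nodes} (forcing $n<2k$ whenever both $x$ and $y$ are mandatory), and \Cref{th:linear_graphlet_enum_delay} (the cost of \texttt{LIN-ENUM-V}).

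For a single \texttt{ENUM-V} call, the two mandatoriness tests on $x$ and $y$ each cost $O(k^2)$ by \Cref{lemma:removable_check_time}. For the chain-following while loop at line~\ref{alg:ksquare_graphlet:line:while}, I would build the truncated BFS of \Cref{lemma:removable_check_time} once at the start of the call (cost $O(k^2)$, touching at most $k$ vertices and $O(k^2)$ edges of the local subgraph around $\In$) and then reuse it: each iteration moves one vertex from the frontier into $\In$, updates the local frontier in $O(k)$, and re-tests $|N(\In)|\in\{1,\ge 2\}$ in $O(1)$. At most $k$ such iterations thus contribute $O(k^2)$. Remaining bookkeeping (updating $\In$, selecting $x,y$, setting up recursive calls) is $O(k)$, so the whole call is $O(k^2)$.

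For $|\mathcal T|$, note that whenever the branching step is reached, at least one of $x,y$ is non-mandatory, so both children at lines~\ref{alg:ksqare_graphlet:line:take_x} and~\ref{alg:ksqare_graphlet:line:remove_x} are guaranteed to contain at least one $k$-graphlet, by essentially the same argument as \Cref{th:correctness_linear_enum}. Hence every internal branching node has exactly two children and every leaf emits at least one solution, so $|\mathcal T|=O(N)$. When the algorithm instead invokes \texttt{LIN-ENUM-V}, \Cref{lemma:g_2k_nodes} yields $n<2k$ and hence $m=O(k^2)$, so by \Cref{th:linear_graphlet_enum_delay} this call runs in $O(n+N_X(m+1))=O((N_X+1)k^2)$, where $N_X$ is the number of solutions it emits. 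Summing $O(k^2)$ over the $O(N)$ nodes of $\mathcal T$, aggregating the \texttt{LIN-ENUM-V} costs, and adding an initial $O(m)$ to build the adjacency structure gives the claimed $O(m+Nk^2)$ total time; the $O(n+m)$ space bound follows by storing only the $O(k)$-size differences of $\In$ and $G$ along each root-to-leaf path, exactly as in \Cref{th:linear_graphlet_enum_delay}.

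The main obstacle is justifying the $O(k)$-amortized cost per iteration of the while loop, since a naive test of $|N(\In)|=1$ could scan $\Theta(k\Delta)$ edges incident to $\In$ in the worst case. The way I would avoid this $\Delta$-dependence is to couple the loop with the truncated BFS of \Cref{lemma:removable_check_time}: because that BFS only ever explores the $k$-local subgraph around $\In$ and touches $O(k^2)$ edges, all neighborhood tests during chain-following can be confined to this small local structure, so the only potentially $\Delta$-dependent step---the initial frontier construction---is already absorbed into the one-time $O(k^2)$ per-call budget.
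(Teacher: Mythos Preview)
Your decomposition into (i)--(iii) is sound, and your handling of the while loop, the mandatoriness tests, and the \texttt{LIN-ENUM-V} fallback all match the paper. But you have misidentified the bottleneck. You assert that ``setting up recursive calls is $O(k)$'', yet the call on line~\ref{alg:ksqare_graphlet:line:remove_x} receives $G - z$, and producing an explicit adjacency structure for $G_X - z$ costs $\Theta(d_{G_X}(z))$, which can be $\Theta(\Delta)\gg k^2$. If instead you try to mark $z$ as deleted lazily and pass $G_X$ unchanged, the problem resurfaces inside \Cref{lemma:removable_check_time} at descendant nodes: the truncated BFS there must now step over an unbounded number of lazily-deleted neighbors while scanning adjacency lists, so its $O(k^2)$ bound---which assumes a clean representation of the current graph---no longer applies. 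Either way a $\Delta$-dependent term reappears, and nothing in your argument removes it.

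This deletion cost is exactly what the paper's proof spends its effort on (the while loop, by contrast, is dispatched in one line). The fix is a non-obvious charging argument: when $d_{G_X}(z) \ge 3k$, the left child $Y$ satisfies $|N_{G_Y}(\In_Y)| \ge d_{G_X}(z) - |\In_X| > 2k$, which forces the \emph{right path} below $Y$ (the chain of successive right children) to contain at least $d_{G_X}(z) - 2k \ge d_{G_X}(z)/3$ nodes before any mandatory vertex can appear. The $O(d_{G_X}(z))$ deletion cost at $X$ is then spread as $O(1)$ per node along that path, and since every node of $\mathcal T$ lies on the right path of at most one such left child, no node is charged twice. This is the step that actually eliminates the dependence on $\Delta$; without it your per-node bound is only $O(k^2 + d_{G_X}(z))$, not $O(k^2)$.
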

\begin{proof}
    We analyze the time complexity of each \changed{tree} node $X \in \mathcal T$, and we denote by $G_X$ the input graph of the call $X$, with the corresponding set $\In_X$.
    Notice that $k$ is the same for all instances in $\mathcal T$\changed{, and that $X$ is not a leaf node of the recursion tree}.
    We first analyze the time complexity when $\size{\In_X} = k-1$ (there cannot be a node satisfying $\size{\In_X} = k$ \changed{as the base case of the algorithm is for $\size{\In_X} = k-1$})\footnote{We could also use $\size{\In} = k$ as the base case, and the overall time complexity would not change. However, the time complexity analysis would become more difficult to follow, and we would introduce one additional level of recursion which we think it is useful, and meaningful, to avoid.}.
    In this case, we can enumerate all $k$-graphlets in $\order{k^2 + k \cdot N_X}$ time, where $N_X = \size{\mathcal S(G_X, \In_X, k)}$\changed{, and the node $X$ produces only leaf children.}
    For each $u \in N_{G_X}(\In)$, $\In \cup \set{u}$ is a $k$-graphlet.
    Therefore, $\sum_{v \in S} d_{G_X}(v)$ is upper-bounded by $k^2 + k\cdot N_X$.

    Now consider the case $\size{\In_X} < k - 1$.
    When $N_{G_X}(\In_X)$ has only one vertex, we can find \changed{it} in $\order{k^2}$ time \changed{by looking at the neighbors of the set and stopping as soon as we find one not in the set}, and it is easy to compute all such vertices in the same time complexity \changed{within the loop of line \ref{alg:ksquare_graphlet:line:while} of \Cref{alg:ksquare_graphlet}}.
    Thus, in what follows we assume that $N_{G_X}(\In_X)$ contains at least two vertices\changed{, and that the tree node $X$ produces non-leaf children}.
    Let $x$ and $y$ be two distinct vertices in $N_{G_X}(\In_X)$: using \Cref{lemma:removable_check_time}, we can check if $x$ and $y$ are mandatory in $\order{k^2}$ time.
    If both $x$ and $y$ are mandatory, the time complexity of $X$ is $\order{k^2}$ since the number of vertices in $G_X$ is $\order{k}$.
    
    Suppose now that one of $x$ or $y$ is non-mandatory and call $z$ such vertex. 
    To compute $G_X - z$, we need $\order{d_{G_X}(z)}$ time using standard data structures.
    Therefore, if $d_{G_X}(z) < 3k$, the time complexity of $X$ is $\order{k^2}$.
    In the case that $d_{G_X}(z) \ge 3k$, we consider the following amortized analysis:
    let $Y$ be a node generated by $G_X$ and $\In_X \cup \set{z}$.
    We call \emph{left child} a child of a recursive node generated by adding one vertex to $\In$ (i.e., in line~\ref{alg:ksqare_graphlet:line:take_x}), whereas we call \emph{right child} a child of a recursive node generated by removing a vertex from $G$ (i.e., in line~\ref{alg:ksqare_graphlet:line:remove_x}).
    For a node $Z \in \mathcal T$, we call a path $\mathcal P$ from $Z$ to a leaf in $\mathcal T$ obtained by only right children the \emph{right path of $Z$} (see \Cref{fig:graphlet-amortization} for an example).
    We show that the right path of $Y$ has at least $d_{G_X}(z) - 2k$ nodes. 
    If it holds, by charging the $O(d_{G_X}(z))$ time cost divided equally on each node of the path, it amounts to $O(1)$ per each node, so the amortized time complexity of $X$ becomes $\order{k^2}$ whereas the complexity of all other nodes is not changed.
    Since $d_{G_X}(z) \ge 3k$ and $\size{\In_X} < k-1$,
    $N_{G_X}(\In_X \cup \set{z})$ has at least $d_{G_X}(z) - \size{\In_X} \ge 2k+2$ vertices.
    Notice that, for a node $Z$, if $N_{G_Z}(\In_Z)$ has $k - \size{\In_Z}$ vertices, there is no mandatory vertex.

    \changed{Therefore, by picking and removing $k$ vertices in $N_{G_X}(\In \cup \set{z})$}, we cannot create new mandatory vertices; this means that the right path of $Y$ has at least $d_{G_X}(z) - 2k \ge k$ nodes.
    As each node is charged at most once, the time complexity of each node is $\order{k^2}$.
    We then conclude that \Cref{alg:ksquare_graphlet} has total time complexity of $\order{N_i \cdot k^2}$.

    We next consider the space complexity.
    Let $X$ be a node and $Y$ be a child of $X$.
    In each recursive call, it is sufficient to store the difference between $\In_X$ and $\In_Y$ and $G_X$ and $G_Y$.
    It is clear that the sum of the difference between $\In_X$ and $\In_Y$ is bounded by $k$ and the difference between $G_X$ and $G_Y$ is $\order{d(z)}$, where $z$ is a removed vertex.
    Therefore, the total space complexity is bounded by $\order{n + m}$.
\end{proof}

\begin{figure}[htb]
    \centering
    \includegraphics[width=.5\textwidth]{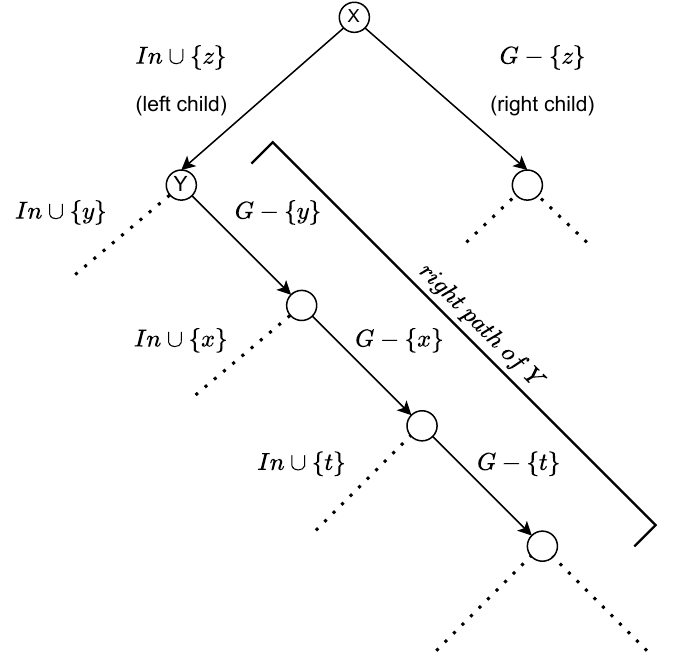}
    \caption{Amortizing the cost of the recursive call $X$ on the right path of $Y$: we charge $O(1)$ to each call on the nodes highlighted by the bracket. These calls are at least $d_{G_X}(z) - 2k$, so we can amortize the $O(d_{G_X}(z))$ cost on them.}
    \label{fig:graphlet-amortization}
\end{figure}

\subsubsection{Bounded-degree graphs}
The focus of the above analysis is to provide bounds that depend only on $k$ and not on the size of the graph $G$.
As a byproduct, it is worth mentioning that the time complexity of this algorithm can be formalized more tightly for bounded-degree graphs, i.e., graphs where the maximum degree $\Delta$ is bounded by a constant.
In the analysis in \Cref{th:ksquare-graphlet-complexity}, 
we consider that the time complexity of finding a mandatory vertex is $\order{k^2}$.
From \Cref{lemma:removable_check_time}, it can be done in $\order{k\cdot \min\set{k, \Delta}}$ time.
Moreover, since removing one vertex can be done in $\order{\Delta}$ time, each recursive call can be done in $\order{k \cdot \min\set{k, \Delta}}$-time without an amortized analysis.
We can thus show a tighter time complexity bound by taking this into account.

\begin{theorem}\label{th:bounded:degree}
    \changed{\Cref{alg:ksquare_graphlet} runs in $O(N_i \cdot k \cdot \min\{k, \Delta\})$ time, where $N_i$ is the number of $k$-graphlets in $G$ that contain $\In \{v_i\}$. Thus, if $\Delta = O(1)$, \Cref{alg:ksquare_graphlet} runs in $O(k \cdot N_i)$ time.}
\end{theorem}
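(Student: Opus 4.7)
The plan is to retrace the proof of Theorem~\ref{th:ksquare-graphlet-complexity} and, wherever it bounded per-node work by $O(k^2)$, replace that bound with the tighter $O(k \cdot \min\{k, \Delta\})$. Two facts drive the refinement: first, Lemma~\ref{lemma:removable_check_time} already gives $O(\min\{m, k^2, k\Delta\}) = O(k \cdot \min\{k, \Delta\})$ time for both the emptiness test on $\mathcal{S}(G_X, \In_X, k)$ and each mandatory-vertex check in the current subgraph $G_X$, because $\Delta_{G_X} \leq \Delta$; second, deleting one vertex from the adjacency representation can be done in $O(\Delta)$ rather than $O(k)$ time.

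I would then walk through the case analysis at a generic recursive call $X$ with input $(G_X, \In_X)$. When $|\In_X| = k-1$, the call outputs $|N_{G_X}(\In_X)|$ leaves with $O(\Delta)$ per leaf plus $O(k \cdot \min\{k,\Delta\})$ setup. When $|N_{G_X}(\In_X)| = 1$, each pass of the chain-following while loop on line~\ref{alg:ksquare_graphlet:line:while} is an emptiness/mandatory test and hence fits the same bound. When both candidates $x, y \in N_{G_X}(\In_X)$ turn out to be mandatory, Lemma~\ref{lemma:g_2k_nodes} forces $|V(G_X)| < 2k$, so $G_X$ has $O(k \cdot \min\{k, \Delta\})$ edges, and Theorem~\ref{th:linear_graphlet_enum_delay} applied to the fallback call \textsc{LIN-ENUM-V} yields amortized $O(k \cdot \min\{k,\Delta\})$ per leaf of its subtree. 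Finally, when some $z \in \{x, y\}$ is non-mandatory, the total work at $X$ is $O(d_{G_X}(z) + k \cdot \min\{k,\Delta\})$: this is immediately $O(k \cdot \min\{k,\Delta\})$ when $d_{G_X}(z) < 3k$, and when $d_{G_X}(z) \geq 3k$ we necessarily have $\Delta \geq 3k$, so $\min\{k,\Delta\} = k$ and the target bound becomes $O(k^2)$, precisely the quantity to which the right-path amortization of Theorem~\ref{th:ksquare-graphlet-complexity} already reduces the $O(d_{G_X}(z))$ removal cost.

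Summing the resulting amortized per-node costs over all $O(N)$ nodes of the recursion tree (each internal node having at least two children, each leaf outputting a distinct solution), and adding the one-time $O(m)$ preprocessing at the root, yields a total running time of $O(m + N \cdot k \cdot \min\{k,\Delta\})$. Specializing to $\Delta = O(1)$ makes $\min\{k,\Delta\}$ a constant, which collapses to $O(m + k \cdot N)$ as claimed.

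The only subtle step I anticipate is verifying that the right-path amortization transfers verbatim to the new accounting. Its validity depends on two purely structural facts from Theorem~\ref{th:ksquare-graphlet-complexity} — the right path of $Y$ has at least $d_{G_X}(z) - 2k$ nodes, and each of them absorbs an $O(1)$ charge — both of which are properties of the algorithm's branching pattern rather than of the cost function, and hence carry over unchanged. The cleanest way to present it is to observe that this amortized regime is active only when $\Delta \geq 3k$, so the $O(k^2)$ charge it produces is always dominated by $O(k \cdot \min\{k,\Delta\})$.
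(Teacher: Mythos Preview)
Your proposal is correct and follows the same approach as the paper: both observe that Lemma~\ref{lemma:removable_check_time} already yields the $O(k\cdot\min\{k,\Delta\})$ cost for the mandatory-vertex tests, and that vertex deletion costs $O(\Delta)$, so the per-call work in Theorem~\ref{th:ksquare-graphlet-complexity} can be re-bounded accordingly. The paper's proof is a two-sentence sketch (``the bottleneck is vertex deletion, which is $O(\Delta)$''); your version is more careful, in particular your explicit handling of the case $d_{G_X}(z)\ge 3k$ by noting that then $\Delta\ge 3k$ forces $\min\{k,\Delta\}=k$, so the right-path amortization from Theorem~\ref{th:ksquare-graphlet-complexity} already delivers the required $O(k^2)=O(k\cdot\min\{k,\Delta\})$ bound --- this fills a gap the paper leaves to the reader.
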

\begin{proof}
    The \changed{bottlenecks} of \Cref{alg:ksquare_graphlet} \changed{are} the deletion of a vertex, that can be trivially done in $\order{\Delta}$ time, \changed{and the identification of mandatory edges, that by \Cref{lemma:removable_check_time} takes $O(\min\set{k, \Delta})$ time.}
    Therefore, each recursive call can be performed in $\order{k\cdot \min\set{k, \Delta}}$ time and the statement holds.
\end{proof}

\subsection{\changed{Generating all \texorpdfstring{$k$}{k}-graphlets using \texorpdfstring{Algorithm~\ref{alg:ksquare_graphlet}}{Algorithm 2}}}
\Cref{alg:ksquare_graphlet} enumerates all $k$-graphlets in $G$ that contain a specific set of vertices $\In$, in \changed{$O(k^2)$ amortized time.}
It remains to see how to generate instances for it, i.e., how to generate \emph{all} $k$-graphlets of $G$ while retaining the same amortized time complexity. 
To do this we rely on \Cref{alg:graphlet_instances}\changed{: Observe how vertices $v_1,\ldots,v_n$ are considered in a \textit{reversed Breadth-First Search (BFS) order}, meaning we can perform a BFS from any arbitrary vertex, write down the vertices in the order they are found in, then reverse this order. The advantage is that every \textit{suffix} $v_i,\ldots,v_n$ is a connected subgraph, so it will always allow finding a $k$-graphlet as long as there are at least $k$ vertices left (which is why we stop at $i = n-k+1$).
We can now} prove the overall complexity for $k$-graphlet enumeration:



\begin{algorithm}[htb]
\DontPrintSemicolon
    \caption{\changed{Generate all $k$-graphlets of a graph $G$ using \Cref{alg:ksquare_graphlet} as subroutine.}}
    \label{alg:graphlet_instances}
    \KwIn{A connected graph $G = (V, E)$, a positive integer $k$}
    \KwOut{All $k$-graphlets.}
    \SetKwFunction{GenIns}{Gen-Ins}{}{}
    \Fn{\GenIns{$G = (V, E), k$}}{
        \changed{Let $v_1, \ldots, v_n$ be a reversed BFS order of $V$} \;
        $G_0 \gets G$\;
        \For(\tcp*[h]{Skip the final $k-1$ vertices}){\changed{$i = 1, 2, \ldots, (n-k+1)$}}{ 
            \changed{\enum{$G_{i-1}$,$\{v_i\}$,$k$}}\;
            $G_i \gets G_{i-1} - \set{v_i}$\;
        }
        
    }
\end{algorithm}


\begin{theorem}\label{th:wrapper_graphlets}
    Let $G = (V, E)$ be a connected graph and $k$ a positive integer. 
    We can enumerate all $k$-graphlets in $G$ in \changed{$\order{n + m + k^2\cdot N}$} total time and $\order{n + m}$ space, where $N$ is the number of solutions.
\end{theorem}
\begin{proof}
    \changed{We show that we can enumerate all $k$-graphlets using \Cref{alg:ksquare_graphlet} as a subroutine.
    Firstly, the reverse BFS order is computed with a simple BFS which takes $O(n+m)$ time.
    Then, for each $i$, we enumerate $\mathcal S_i = \mathcal S(G_{i-1} = G - \set{v_1, \ldots, v_{i-1}}, \{v_i\}, k)$.
    By the reversed BFS order, $G_{i-1}$ is a connected graph and thus will contain at least one solution (i.e., $k$-graphlet) as long as it has at least $k$ vertices, i.e., for $i<n-k$, which are the values of $i$ we consider.}
    
    \changed{We can thus enumerate all solutions in the set above by calling \enum{$G_{i-1}$,$\{v_i\}$,$k$}, which by \Cref{th:ksquare-graphlet-complexity} runs in $\order{k^2 \cdot N_i}$ time, where $N_i = |\mathcal{S}(G_i, \{v_i\}, k)|$}. 
    
    \changed{Finally, observe how the various $\mathcal S_i$ are a partition of the total set of solutions $\mathcal S(G, \emptyset, k)$: indeed for any graphlet $g$, let $v_j$ be the first vertex of $g$ in the reverse BFS order; by definition of $\mathcal{S}$ we have $g\in \mathcal S(G_{j-1}, \{v_j\}, k)$. Moreover, $g$ cannot be in any other $S_h$ as if $h<j$ the graphlets found contain $v_h\not\in g$, and if $h>j$ the graphlets found cannot contain $v_j$, as it is not in $G_{h-1}$. Thus $\bigcup\limits_{i\in \{1,\ldots, n-k+1\}}\mathcal{S}_i = \mathcal S(G, \emptyset, k)$, meaning \Cref{alg:graphlet_instances} will enumerate all $k$-graphlets of $G$ in $\order{n+m+ k^2 \cdot N}$ total time, i.e., $O(k^2)$ amortized time per solution.} 
    The space consumption of the whole strategy is $\order{n+m}$, needed by \Cref{alg:ksquare_graphlet} by \Cref{th:ksquare-graphlet-complexity}, as \Cref{alg:graphlet_instances} does not require asymptotically more space.
\end{proof}

\changed{We remark that, as shown in \Cref{th:bounded:degree}, the amortized cost is actually $O(k\cdot \min\set{k, \Delta})$, which technically could be lower than $O(k^2)$, although this is likely to be the case only in bounded-degree graphs.}

\section{Enumeration of edge \texorpdfstring{$k$}{k}-graphlets and \texorpdfstring{$k$}{k}-subtrees}\label{sec:es}

In this section we address the problem of enumerating all edge $k$-graphlets for a given graph $G$. We observe that we can transform this problem into the $k$-graphlet enumeration presented in Section~\ref{sec:graphlet} by replacing $G$ with its line graph (which represents the adjacency between the edges of $G$ rather than between its nodes) \changed{\cite{Harary1960LineGraph}}.
This gives an amortized $\order{k^2}$-time algorithm for enumerating all edge $k$-graphlets, but we \changed{are able to achieve} a better complexity \changed{for this task}, namely $\order{k}$ amortized time per solution and $\order{n + m}$ overall space. A slight variation of what we propose in this section also gives an enumeration algorithm for all $k$-subtrees with the same time and space complexity.

\changed{
For ease of reading, we summarize the key operations of the algorithm:
\begin{itemize}
    \item We use binary partition, where we select an edge $e$ and enumerate (i) all solutions containing $e$, then (ii) all solutions not containing $e$.
    \item We select $e$ so that the partial solution $\In$ chosen so far is connected.
    \item Every time we select or exclude an edge, we identify special types of edges (\textit{unnecessary}, \textit{mandatory}, \textit{far}) and process them with dedicated efficient strategies, that allow us to amortize the time complexity via push-out amortization. 
\end{itemize}
}

\subsection{Trimming \changed{mandatory and unnecessary edges}}\label{subsec:trimming}


In the following, given a graph $G = (V, E)$ and two disjoint sets $\In, \Ex \subseteq E$ of its edges, such that $G[\In]$ is connected, we denote by $\mathcal S(G, \In, \Ex, k)$ the set of edge $k$-graphlets \changed{in $G$} that contain $\In$ and do not contain $\Ex$. We denote by $\Vin$ the set of vertices $V(G[\In])$, \changed{and we recall that $V(G[\In])$ contains all and only the vertices that are incident to at least one edge in $\In$.}

\changed{As $\mathcal S(G, \In, \Ex, k) = \mathcal S(G - \Ex, \In, \emptyset, k)$, we can simply ignore $\Ex$ and hereafter use the reduced notation $S(G, \In, k)$, and denoting the exclusion of a set of edges $\Ex$ by simply removing them from $G$, i.e., $S(G - \Ex, \In, k)$}. Recall that the removal of $\Ex$ can be done in $O(\size{\Ex})$ time as remarked in Section \ref{subsec:bp}.

\changed{
Moreover, as edges from $G$ are removed, some edges may become \textit{mandatory}: an edge $e$ is mandatory if the connected component in $G - e$ containing $\Vin$ has less than $k$ edges (hence, $e$ must be part of \emph{all} solutions containing $G[\In]$).
In principle, we could immediately add all mandatory edges directly to $\In$, however, since we want to preserve $\In$ to be a connected subgraph, the algorithm will only add the mandatory edges adjacent to $\Vin$ (and iteratively add any mandatory edge that may become adjacent to $\Vin$ in this process).
while all mandatory edges must eventually be added to $\In$, it is not necessary to do this immediately, as by only recurring into instances that have at least one solution, we will always finally reach a point where all the mandatory edges are adjacent to $\Vin$ and are thus added to $\In$. Let us consider the complexity of this operation:
}

We also \changed{want the algorithm to} remove \emph{unnecessary} edges \changed{: these are the edges that cannot be part of solutions containing $\In$, as reaching them would require including more than $k$ edges. We define these as edges $e$ such that $dist(\Vin, e) \ge k - \size{\In}$, i.e., those whose distance from the closest vertex of $\Vin$ to an endpoint of $e$ is at least $k - \size{\In}$.}

\changed{We call the identification and removal of unnecessary edges from $G$, as well as the (iterative) addition to $\In$ of all mandatory edges adjacent to $\Vin$, the \emph{trimming operation}, and denote it by \trim{$G, \In, k$}.}

\changed{More formally, the operation \trim{$G, \In, k$} will return a pair $\langle H, \In' \rangle$ where $\In'$ is $\In$ plus the mandatory edges that were adjacent to $\Vin$, and $H$ is the graph obtained by removing the unnecessary edges from $G$, i.e., $E(H) = \{e\in E(G) | dist(V_{\In'}, e) \ge k - \size{\In'} \}$. We now show that this operation can be performed in $O(|E(H)|)$ time.}

\begin{lemma}\label{lem:trim:unnecessary} 
 \changed{Removing unnecessary edges in the instance $G, \In, k$ can be performed in $\order{\size{E(H)}}$ time, where $H$ is the graph obtained after removing such edges.}
\end{lemma}
\begin{proof}

    \changed{We show how to find the edges of $H$ in  $\order{\size{E(H)}}$ time, from which we can directly construct a copy of $H$. Firstly, all edges of $\In$ are part of $H$. Moreover, if $\size{\In}= k$ then there is no other edge and we are done; otherwise, as $\size{\In} \le k$, any edge neighboring $\In$ can potentially be added and thus will be part of $H$.}

    \changed{To find the remaining edges of $H$ let us contract all edges of $\In$: as contracting an edge $\{x,y\}$ can be done in either $\order{d(x)}$ or $\order{d(y)}$ (see Section~\ref{subsec:bp}), we can complete the operation in $\sum_{v\in \Vin}(d(v))$ time by choosing one arbitrary vertex $r\in \Vin$ and iteratively contracting every edge $\{r,x\} \in \In$ adjacent to $r$ in $\order{d(r)}$ time. Indeed since $\In$ is connected, this will finally contract all its edges, and by the above observation all edges incident to $\Vin$ are part of $H$, so $\sum_{v\in \Vin}(d(v)) \le \size{E(H)}$.}

    \changed{Let $r$ be the vertex obtained by contracting $\In$:
    We can now perform a breadth-first search from $r$ limited to distance $k - \size{\In}$, which will identify all edges that belong to $H$ without traversing any of the farther away ones that do not, i.e., we can identify all edges of $H$ in $\order{\size{E(H)}}$ time.}
\end{proof}

\changed{Now it is crucial to observe that, by definition, any edge that is mandatory cannot be also unnecessary. This means that the trimming operation can first remove unnecessary edges, obtaining $H$, and only then add the mandatory ones to $\In$. Moreover, adding the mandatory edges does not increase distances from $\Vin$, so it does not in turn create more unnecessary edges, and there is no need to reiterate the process. Thus, given $H$, we now focus on how to process the mandatory edges:}

\begin{lemma}\label{lem:trim:mandatory}
    \changed{Iterative identification and addition to $\In$ of all mandatory edges adjacent to $\Vin$, in the instance $H, \In, k$, can be performed in $O(|E(H)|)$ time.}
\end{lemma}
\begin{proof}
    \changed{
        Firstly, if $|E(H)| = k$ then all edges are mandatory and we are done. Otherwise, any mandatory edge $e$ must be a \textit{bridge}, i.e., an edge whose removal disconnects the graph, causing the loss of more than just $e$. In the following let us call the number of edges unreachable from $\In$ when removing $e$ the \textit{loss} of $e$.}

        \changed{We can identify all bridges in $O(|E(H)|)$ time \cite{DBLP:journals/ipl/Tarjan74}, and these actually give us a tree-like decomposition where nodes are 2-edge-connected component (2ECC), and they are connected by the bridges.}

        \changed{Choose as root of this tree an arbitrary 2ECC that contains at least one node of $\Vin$, and let us proceed in a bottom up way: for each leaf component $C$, its size is exactly the loss of the bridge connecting $C$ to the tree (plus one, for the bridge itself). Going up, each internal component will cause a loss equal to its own size, plus the size of all its children bridges, which we have already computed in the bottom-up procedure. Of course, once we find a bridge in $\In$ or a component containing edged of $\In$ we can stop in this subtree since they cannot be removed. Clearly, a bridge is mandatory if its loss is more than $|E(H)| - k$, thus we can immediately label every such bridge as mandatory. This process takes $O(|E(H)|)$ time as 2ECC are disjoint and the size of each is only computed once.}
        
        \changed{Finally, we go top-down from the components containing vertices in $\Vin$ and add to $\In$ all mandatory bridges adjacent to $\Vin$, as well as iteratively the mandatory bridges adjacent to them (as they became part of $\In$). This part also takes $O(|E(H)|)$ time as it can be done via a breadth-first search. This completes the proof.}
\end{proof}

\changed{From Lemmas \ref{lem:trim:unnecessary} and \ref{lem:trim:mandatory} we immediately obtain the following:}

\begin{corollary}\label{lem:tree:time:trim} 
 \changed{The trimming operation \trim{$G, \In, k$}, i.e., iterative addition of mandatory edges adjacent to $\Vin$, and removal of unnecessary edges, can be performed in $\order{\size{E(H)}}$ time, where $H$ is the graph obtained from $G$ after removing unnecessary edges. Note that  $|E(H)|\le |E(G)|$.}
\end{corollary}

For this reason, in the rest of this section we focus on the problem of enumerating the \changed{edge} $k$-graphlets of a generic instance $\mathcal S (G, \In, k)$, with the understanding that any excluded edges (i.e., $\Ex$) and unnecessary edges will be handled by the trimming operation. We say that an instance is \emph{trimmed} if it has no unnecessary edges and $\Ex = \emptyset$.

\subsection{\changed{Far edges}}\label{subsec:far_edges}

\changed{Given an instance of $G$, our partially formed solution $\In$ and $k$, our goal is to enumerate $\mathcal{S} (G, \In, k)$. However, to optimize the running time, we enumerate some of the solutions with a different strategy, using \textit{far edges}.}

\changed{We say that an edge $e$ is a \textit{far edge} if $dist(\Vin, e) = k - |\In| -1$, and define the set of all far edges as $E_{\tt far} = \inset{e \in E}{dist(\Vin, e) = k - |\In| -1}$. Intuitively, this means that the only way to find a solution including the edge $e$, is to include a shortest path from $\Vin$ to $e$. Thus we can transform the problem of enumerating the solutions that contain a far edge into a shortest-path enumeration problem, then remove $E_{\tt far}$ from $G$ to enumerate the rest.
}

Formally, we partition $\mathcal S(G, \In, k)$ into two disjoint sets, which are enumerated separately: 
\begin{itemize}
    \item $\mathcal S_{\tt far}(G, \In, k)$ is the set of solutions \changed{in $\mathcal S(G, \In, k)$ that contain a far edge.}
    \item $\mathcal S_{\tt near}(G, \In, k)$ \changed{are the remaining solutions, i.e.,}  $\mathcal S(G, \In, k) \setminus \mathcal S_{\tt far}(G, \In, k)$. 
\end{itemize}

\begin{observation}\label{lem:tree:sp}
    Let $H$ be an edge $k$-graphlet in $\mathcal S_{\tt far}(G, \In, k)$. Then $H - e$ must be a shortest path between $\Vin$ and an endpoint of $e$.
\end{observation}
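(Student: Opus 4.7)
The plan is a tight edge-counting argument. Since $H$ has exactly $k$ edges and $e$ is one of them, $H - e$ has exactly $k - 1$ edges, and the goal is to show that these $k - 1$ edges form a shortest path from $\Vin$ to an endpoint of $e$. The hypothesis $dist(\Vin, e) = k$ will supply precisely the matching lower bound of $k - 1$ on the number of edges any such path must use, so everything hinges on this exact equality.

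First I would establish that some endpoint of $e$, say $u$, remains reachable from $\Vin$ inside $H - e$. Because $dist(\Vin, e) = k > 0$, the edge $e$ cannot lie in $\In$ and neither of its endpoints can lie in $\Vin$, so $\Vin \subseteq V(H - e)$ and $G[\In] \subseteq H - e$ is still connected. Since $H$ is connected, deleting the single edge $e = \{u,v\}$ leaves a graph with at most two components, and the whole of $\Vin$ must sit in one of them; at least one of $u$ or $v$ therefore lies in this $\Vin$-component. Call this endpoint $u$. I then invoke the edge-to-set distance to note that both endpoints of $e$ lie at distance at least $k - 1$ from $\Vin$ in $G$, so any $\Vin$-to-$u$ path inside the subgraph $H - e$ uses at least $k - 1$ edges. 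But $H - e$ has only $k - 1$ edges in total, so such a path must consume all of them. Hence $H - e$ coincides with this path, its length is exactly $k - 1 = dist_G(\Vin, u)$, and so it is a shortest $\Vin$-to-$u$ path in $G$, as claimed.

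The only delicate point I expect is the connectivity step: carefully verifying that at least one endpoint of $e$ remains joined to $\Vin$ after removing $e$, that $\Vin$ itself is not broken apart by this removal, and that no ``extra'' edge can hide inside $H - e$ outside the shortest path. All of these follow from the connectivity of $G[\In]$ together with $dist(\Vin, e) = k > 0$, after which the rest of the argument reduces to the clean numerical match between $|E(H - e)| = k - 1$ and the minimum $\Vin$-to-endpoint-of-$e$ path length of $k - 1$.
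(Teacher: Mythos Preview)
The paper states this result as an \emph{Observation} and does not supply a proof. Your edge-counting argument is exactly the natural justification and is correct: the distance hypothesis forces every $\Vin$-to-endpoint-of-$e$ path to use at least $k-1$ edges, while $H-e$ has exactly $k-1$ edges, so the path you extract inside $H-e$ must consume all of them and hence coincide with $H-e$ as a shortest path.
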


From \Cref{lem:tree:sp}, we can enumerate all edge $k$-graphlets in \changed{$\mathcal S_{\tt far}(G, \In, k)$} using an algorithm for enumerating all shortest paths.
Enumeration of all shortest paths can be done in constant amortized time with linear time preprocessing (see~\cite[Section 4.1]{conte_et_al:LIPIcs.ISAAC.2023.21}).
\changed{The authors of \cite{conte_et_al:LIPIcs.ISAAC.2023.21} gave a constant amortized time enumeration of all $s$-$t$ paths on DAGs. 
On the other hand, our problem is multi-source and multi-sink shortest path enumeration, but it is known that this problem can be reduced to a $s$-$t$ path enumeration on DAGs \changed{(see, for instance, \cite{Birmele:Optimal:2013})}.}

\changed{As $\mathcal S_{\tt far}(G, \In, k)$ can be enumerated efficiently, we then focus on the enumeration of $\mathcal S_{\tt near}(G, \In, k)$. Firstly, observe that we can find $\mathcal S_{\tt near}(G, \In, k)$ by simply removing $E_{\tt far}$ from $G$:}

\begin{lemma}\label{lem:tree:near}
    $\mathcal S_{\tt near}(G, \In, k) = \mathcal S(G - E_{\tt far}, \In, k)$. 
\end{lemma}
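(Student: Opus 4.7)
The plan is to prove the set equality by showing inclusion in both directions, using only the definitions of $E_{\tt far}$, $\mathcal{S}_{\tt near}$, and $\mathcal{S}$. Both directions should be essentially unpacking of definitions, since $E_{\tt far}$ is defined as exactly the set of edges of $G$ whose presence in a solution would place that solution in $\mathcal{S}_{\tt far}$.

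For the forward inclusion $\mathcal{S}_{\tt near}(G, \In, k) \subseteq \mathcal{S}(G - E_{\tt far}, \In, k)$, I would take an arbitrary $H \in \mathcal{S}_{\tt near}(G, \In, k)$. By the definition of $\mathcal{S}_{\tt near}$, $H$ is an edge $k$-graphlet of $G$ containing $\In$ and contains no edge $e$ with $dist(\Vin, e) = k$, i.e., no edge of $H$ lies in $E_{\tt far}$. Hence every edge of $H$ belongs to $E(G) \setminus E_{\tt far} = E(G - E_{\tt far})$, so $H$ is a connected edge-induced subgraph of $G - E_{\tt far}$ with exactly $k$ edges that contains $\In$, yielding $H \in \mathcal{S}(G - E_{\tt far}, \In, k)$.

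For the reverse inclusion $\mathcal{S}(G - E_{\tt far}, \In, k) \subseteq \mathcal{S}_{\tt near}(G, \In, k)$, I would take $H \in \mathcal{S}(G - E_{\tt far}, \In, k)$. Since $G - E_{\tt far}$ is a subgraph of $G$ (differing only in the removal of edges), $H$ is still a connected edge-induced subgraph of $G$ with $k$ edges containing $\In$, so $H \in \mathcal{S}(G, \In, k)$. Moreover, by construction no edge of $H$ is in $E_{\tt far}$, so no edge $e$ of $H$ satisfies $dist_G(\Vin, e) = k$; thus $H \notin \mathcal{S}_{\tt far}(G, \In, k)$, and therefore $H \in \mathcal{S}(G, \In, k) \setminus \mathcal{S}_{\tt far}(G, \In, k) = \mathcal{S}_{\tt near}(G, \In, k)$.

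There is no real obstacle here: the only subtle point worth a one-line remark is that distances are measured in $G$ on both sides (since $E_{\tt far}$ is defined with respect to $G$), so the distance predicate used to define $\mathcal{S}_{\tt near}$ and the one used to define $E_{\tt far}$ are literally the same, making the equivalence ``edge of $H$ has $dist_G(\Vin, \cdot) = k$'' $\iff$ ``edge of $H$ lies in $E_{\tt far}$'' immediate.
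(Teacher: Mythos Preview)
Your proposal is correct and follows essentially the same approach as the paper: both prove the set equality by double inclusion, unpacking the definitions of $\mathcal S_{\tt near}$ and $E_{\tt far}$. Your write-up is in fact more explicit than the paper's (which dispatches each direction in a single sentence), but the underlying argument is identical.
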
%
\begin{proof}%
    For any $H \in \mathcal S_{\tt near}(G, \In, k)$, $H$ cannot be in $\mathcal S_{\tt far}(G, \In, k)$  as it does not contain any edge from $E_{\tt far}$; hence, $H \in \mathcal S(G - E_{\tt far}, \In, k)$. Vice versa, for any $H \in \mathcal S(G - E_{\tt far}, \In, k)$, $H$ is contained in $\mathcal S_{\tt near}(G, \In, k)$ by definition.
\end{proof}

\changed{By Lemma~\ref{lem:tree:near}, it suffices to perform \trim{$G - E_{\tt far}, \In, k$} and enumerate all edge $k$-graphlets in the resulting trimmed instance.}

\subsection{\changed{Partition strategy}}\label{subsec:partition_strategy}
\Cref{alg:k-subtrees} summarizes the partition strategy, where we organize the computation so that the push-out amortization can be applied.
\changed{Recall that we select $e$ so that $\In$ stays connected, and that we only immediately add to $\In$ the mandatory edges adjacent to $\Vin$.}

\changed{A key idea to make the amortization work are \textit{heavy edges}:} we say that an edge $e$ is \emph{heavy} if the number of unnecessary edges in $G - e$ is \changed{at least $\size{E}/2$}.

\begin{lemma}\label{lem:heavy}
    Let $G, \In, k$ be a trimmed instance. \changed{$\Gamma(\Vin)$ has at most one heavy edge, and least one non-heavy edge.}
\end{lemma}
\begin{proof}
    \changed{Firstly, since the instance is trimmed, $\Gamma(\Vin)$ has no mandatory edges, and has at least 2 edges (if there was only one edge, it would be mandatory to use it). 
    Observe also how $G$ has initially no unnecessary edges, so the unnecessary edges that determine whether an edge $e$ is heavy or not are only the ones created by the removal of $e$ itself.}
    
    \changed{Suppose that $\Gamma(\Vin)$ has two heavy edges $e \neq f$. Let $U_e$ be the unnecessary edges in $G-\{e\}$, and $U_f$ be the unnecessary edges in $G-\{f\}$. We first prove that $U_e$ and $U_f$ are disjoint: if an edge $g$ is in $U_e$, since it was not unnecessary in $G$, it means in $G$ it had a shortest path towards $\Vin$ that used $e$; this shortest path clearly cannot also use $f$, as just one between the edges $e$ and $f$ is sufficient to connect to $\Vin$. Thus, this shortest path is not broken by the removal of $f$, meaning $g\not\in U_f$. This means $U_e \cap U_f = \emptyset$. Moreover, $e$ and $f$ are directly adjacent to $\Vin$ so $e\not\in U_f$ and $f\not\in U_e$.} 
    
    \changed{Let $m = |E(G)|$. We have $|U_e|+|U_f| \le m-2$ as $e$ and $f$ themselves do not belong to these sets; this contradicts that both $e$ and $f$ can be heavy as it would imply $|U_e|+|U_f| \ge m/2 + m/2 = m$. Thus there is at most one heavy edge. Finally, as we observed $\Gamma(\Vin)$ has at least 2 edges, it follows that at least one is non heavy.}
\end{proof}


\changed{The following theorem establishes a bound on the space consumption of our strategy.}

\begin{theorem}\label{theo:cor}
    \Cref{alg:k-subtrees} \changed{computes} $\mathcal S(G, \In, k)$ using $\order{n + m}$ space.
\end{theorem}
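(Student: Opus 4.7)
The plan is to handle correctness and the space bound separately, proving correctness by induction on $k - \size{\In}$ and bounding space by a path-in-the-recursion-tree accounting. For correctness, the base case $\size{\In} = k$ is immediate: $\In$ itself is the unique element of $\mathcal S(G, \In, k)$ and is output directly. In the inductive step I would invoke the decomposition $\mathcal S(G, \In, k) = \mathcal S_{\tt far}(G, \In, k) \sqcup \mathcal S_{\tt near}(G, \In, k)$ built into the algorithm. The far part is delegated to the shortest-path enumerator of~\cite{conte_et_al:LIPIcs.ISAAC.2023.21}, and Observation~\ref{lem:tree:sp} guarantees that its output is exactly $\mathcal S_{\tt far}(G, \In, k)$. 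For the near part, Lemma~\ref{lem:tree:near} identifies $\mathcal S_{\tt near}(G, \In, k)$ with $\mathcal S(G - E_{\tt far}, \In, k)$, and after the call to \trim{$G, \In, E_{\tt far}, k$} the solution set is unchanged, because the only edges removed are either explicitly excluded or unnecessary (hence cannot lie in any $k$-sized solution containing $\In$). On the trimmed instance the algorithm performs binary partition on a pivot edge $e \in \Gamma(\Vin)$: the two branches cover $\mathcal S(G, \In \cup \{e\}, k)$ and $\mathcal S(G - e, \In, k)$ disjointly, and in each branch either $\size{\In}$ strictly grows or $\size{E(G)}$ strictly shrinks, so the induction is well founded. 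The case distinction between a non-heavy and a heavy pivot (Lemma~\ref{lem:heavy}) is orthogonal to correctness and matters only for the amortized time analysis.

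For the space bound I would exploit that the algorithm is executed depth-first, so at any instant only the frames along a single root-to-current-leaf path of the recursion tree are alive on the call stack. Each frame can be represented as a delta against its parent: the edges newly added to $\In$, together with the edges removed by \trim{} and by the $G - e$ step. Each edge and each vertex of the original input appears in at most one such delta along a given path, so the aggregate delta across the stack has size $\order{n + m}$. The working memory used at a single frame for BFS (to compute $E_{\tt far}$, to realize the trim operation via Lemma~\ref{lem:tree:time:trim}, and to drive the shortest-path enumerator) is $\order{n+m}$, but this scratch is released before descending into the next recursive call, so it does not accumulate along the depth of the recursion, yielding the claimed $\order{n+m}$ total space.

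The main obstacle I anticipate is matching the preconditions of the shortest-path enumeration subroutine to the state of the algorithm at each invocation: it is called on a possibly already-trimmed instance, and one has to check that Observation~\ref{lem:tree:sp} applies verbatim in that context, i.e., that $E_{\tt far}$ is computed before any modification that could alter the distances from $\Vin$, and that the shortest paths used are those in the current trimmed graph rather than in the original $G$. A short argument pinning down the order of trim/BFS/enumerate operations inside one invocation of Algorithm~\ref{alg:k-subtrees}, together with the observation that trimming removes only edges that no solution of the current instance can contain, should close this gap.
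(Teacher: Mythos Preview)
Your approach mirrors the paper's, and the space argument via per-frame deltas along a single root-to-leaf path is exactly what the paper does (in one sentence). There is, however, one substantive misstatement: Lemma~\ref{lem:heavy} is \emph{not} orthogonal to correctness. Line~\ref{alg:e-pick} of Algorithm~\ref{alg:k-subtrees} requires a non-heavy edge in $\Gamma_{G - E_{\tt far}}(\Vin)$ to exist; if none did, the algorithm would be ill-defined at that step. The paper's proof invokes Lemma~\ref{lem:heavy} precisely here, together with the observation that after the mandatory-edge while loop one has $|\Gamma(\Vin)| \geq 2$ (a lone incident edge is necessarily mandatory and is therefore absorbed by the loop), so at most one incident edge is heavy and a non-heavy pivot is always available. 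You also skip over that while loop entirely; it is harmless for correctness since mandatory edges lie in every solution by definition, but since it modifies $\In$ between the far-enumeration and the binary-partition step it needs at least a sentence.

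Two minor notes. Your induction on $k - |\In|$ alone is not well-founded, because in the $G - e$ branch $|\In|$ is unchanged; induct lexicographically on $(k - |\In|,\, |E(G)|)$, which is what your own parenthetical already hints at. And the ``main obstacle'' you flag at the end is a non-issue: Observation~\ref{lem:tree:sp} is a statement about the current instance, whatever it is, and trimming only removes edges that no solution of that instance can contain, so distances and far edges are computed consistently within each call.
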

\begin{proof}
    By \Cref{lem:tree:sp} and \Cref{lem:tree:near}, we can enumerate all solutions in $\mathcal S_{\tt far}(G, \In, k)$ and 
    $\mathcal S_{\tt near}(G, \In, k) = \mathcal S(G - E_{\tt far}, \In, k)$\changed{, respectively}.
    By \Cref{lem:heavy}, $\Gamma(\Vin)$ contains at least one non-heavy edge $e$.
    Moreover, when $\Gamma(\Vin)$ has only one edge, we add this edge to $\In$ until $\Gamma(\Vin)$ \changed{contains} at least two edges \changed{(i.e. we follow the chain of mandatory edges)}.
    Thus, we can partition $\mathcal S(G - E_{\tt far}, In, k)$ into two sets using a non-heavy edge.
    Moreover, in each node, we store only the set of edges added to $\In$ and the set of removed edges.
    Therefore, the total space requirement is $\order{n + m}$ at any time.
\end{proof}

\DontPrintSemicolon
\begin{algorithm}[t]
    \caption{Given a graph $G$, a set $\In$ of its edges inducing a connected subgraph, and a non-negative integer $k$, the algorithm enumerates all edge $k$-graphlets  that contain $\In$.}
    \label{alg:k-subtrees}
    \KwIn{Graph $G = (V, E)$, edge set $\In \subseteq E$, non-negative integer $k$}
    \KwOut{All edge $k$-graphlets of $G$ that contains $\In$}
    \SetKwProg{Fn}{Function}{:}{end}
    
    \Fn{\EnumEdge{$G = (V, E), \In, k$}}{
        \lIf{$\size{\In} = k$}{\label{alg:k-subtrees:line:if-k}
            Output $\In$ and \Return
        }
        \If{$\size{\In} = k - 1$}{
            \lFor{$f \in \Gamma(\Vin)$}{\label{alg:k-subtrees:line:for-k-1}
                \textbf{output} $\In \cup\set{f}$;
                \Return
              }
        }            
        
        Enumerate all solutions that contain a far edge. \label{alg:shortest-paths} \tcp*{See Section \ref{subsec:far_edges}}

        $G \gets G - E_{\tt far}$ \tcp*{Remove all far edges}
        
        Let $e$ be a non-heavy edge in $\Gamma_{G}(\Vin)$\label{alg:e-pick}\tcp*{$e$ exists by \Cref{lem:heavy}}
        \BlankLine
        \changed{$\langle H', \In' \rangle \gets \trim{G, \In \cup \set{e}, k}$}\;
        \changed{\EnumEdge{$H', \In', k$ }}\label{alg:trim_in}\tcp*{Solutions containing $e$}
        \BlankLine
        \changed{$\langle H'', \In'' \rangle \gets \trim{G - \{e\}, \In, k}$}\;
        \changed{\EnumEdge{$H'', \In'', k$}}\label{alg:trim_ex}\tcp*{Solutions not containing $e$}
    }

    \Fn{$\trim{G, \In, k}$}{
        \While{$\Gamma(\Vin)$ contains a mandatory edge}{
            Let $e$ be a mandatory edge in $\Gamma(\Vin)$\;
            $\In \gets \In \cup \{e\}$\;
        }   

        $E_H \gets \{e\in E(G) | dist(\Vin, e) < k - |\In|\}$
    }
\end{algorithm}

\subsection{Time complexity analysis}
We now show how to apply the push-out (PO) amortization to state the time complexity analysis of \Cref{alg:k-subtrees}.
We first show that, \changed{given a recursive node $X$ corresponding to \EnumEdge{$G = (V, E), \In, k$}}, the algorithm runs \changed{in amortized $\order{m}$ time, i.e., $\order{m\cdot N}$ time} if the instance received at the root is already trimmed, where $m = |E(G)|$. \changed{We will later improve this to $O(k)$ amortized time but this is a necessary first step.}  

\changed{By Corollary \ref{lem:tree:time:trim} and what stated in Section \ref{subsec:far_edges}, The trimming operation and the enumeration of shortest path \Cref{alg:shortest-paths} can be performed in $\order{m +\size{ch(X)}}$ time, where $\size{ch(X)}$ is the number of recursive children of $X$ (which is 2 for the binary partition, plus one for each shortest path enumerated on \Cref{alg:shortest-paths}.})

\changed{As the recursive subtree is a tree where every recursive node has at least 2 children, i.e., the total number of recursive nodes is at least half of the total number of leaves, and each leaf corresponds to one of the $N$ solutions, the total cost is $\order{m \cdot N}$.}

\changed{Now, our goal is to further improve the analysis, and show that \Cref{alg:k-subtrees} runs in amortized $\order{k}$ time.
Each leaf node demands $\order{k}$ time since each leaf node just outputs a solution. Let us formalize the cost of internal nodes:}

\begin{lemma}\label{lem:X:cost}  
\changed{The time required a recursive call $X=$ \EnumEdge{$G, \In, k$}, ignoring the cost if its two recursive children, is $\order{m_X + \size{ch(X)}}$,
where $m_X = |E(G)|$ and $\size{ch(X)}$ is the total number of children (including the solutions generated by enumerating shortest paths to far edges).}
\end{lemma}
\begin{proof}
\changed{Firstly, if triggered, Lines \ref{alg:k-subtrees:line:if-k}-\ref{alg:k-subtrees:line:for-k-1}, take just $O(m_X)$ time.
Then, \Cref{alg:shortest-paths} takes $\order{m_X + \size{ch(X)}}$ time as discussed in \Cref{subsec:far_edges}. Deleting $\Efar$ also takes $O(m_X)$ time as edges take constant time each to delete. Finally, the two trimming operations take $O(|E(H)|)$ time each by \Cref{lem:tree:time:trim}, where $H$ is the resulting graph generated, which is a subgraph of $G$, so $|E(H)|\le m_X$.}
\end{proof}

We will then show that each node $X$ satisfies the PO condition by setting $\alpha = \beta = 5/4$. \changed{To do so, we define $c^*$ as} an arbitrary constant that is larger than the constant hidden in the \changed{big O} notation for computation time at internal nodes and at leaf nodes.
\changed{Thus,} we assume that each internal node $X$ takes $c^*(m_X + \size{ch(X)})$ time and each leaf node $T$ takes $c^*\cdot k$ time.

\changed{It is worth noting that $O(k)$ amortized time is trivially obtained on instances where the graph has $O(k)$ edges:}
\begin{observation}\label{obs:4k_edges}
    \changed{When $m_X = O(k)$, \Cref{alg:k-subtrees} runs in amortized $\order{k}$ time since, as shown above it achieves amortized $\order{m}$ time where $m$ is the size of the graph, and in this case the size of the graph received by the recursive call $X$ is indeed $m_X\le 4k$ and thus $O(k)$, and in this case we have $m_X= \order{k}$.}
\end{observation}

\changed{Thus, although we do not actually use the assumption, in we observe that the amortization is only really necessary when $m_X = \omega(k)$.}

\changed{Next, let us consider lower bounds for the size of each trimmed instance received by a child during the recursion, analyzing the various operations that change the instance.}
\changed{We first consider how trimming after removal of far edges does not remove further edges:}

\begin{lemma}\label{lem:trim:far}
\changed{Let $G,\In\neq \emptyset,k$ be a trimmed instance, and $\Efar$ its far edges. Then, \trim{$G - \Efar, \In, k$} has at least $\size{E(G)} - \size{E_{\tt far}}$ edges.}
\end{lemma}
\begin{proof}
    From the definition of the operation \trim{}, we remove all unnecessary edges.
    \changed{Observe how a far edge could only be reached by $\In$ by adding a shortest path to $e$ and $e$ itself; any edge farther away from $\In$ could not be added as it would be unnecessary (thus not present in the trimmed instance). This means that no edge would require $e$ in order to reach $\In$, and thus the removal of the far edge $e$ does not cause any other edge to become unnecessary (or far), and thus nothing other than $\Efar$ is removed by \trim{$G - \Efar, \In, k$}.}
\end{proof}

\changed{Next, we show that adding one edge to $\In$ does not change the size of a graph when an instance is trimmed:}

\begin{lemma}\label{lem:add:trim}
    \changed{Let $G,\In, k$ be a trimmed instance with no far edges, i.e., $E_{\tt far} = \emptyset$. 
    For an edge $e \in \Gamma(\In)$, the graph obtained by \trim{$G, \In \cup \set{e}, k$} has $\size{E(G)}$ edges.}
\end{lemma}
\begin{proof}
    \changed{As \trim{} removes only unnecessary edges, it is sufficient to show that no edge becomes unnecessary: indeed, an edge is unnecessary if its distance from $\Vin$ is at least $|\In|-k$. However, every edge that is not a far edge, has distance from $\In$ smaller than $|\In|-k-1$, i.e., at most $|\In|-k-2$ . Adding $e$ increase $|\In|$ by one, thus the new distance from $\In \cup \{e\}$, in the worst case, will be at most $|\In|-k-2+1 = |\In|-k-1$. In other words, the addition of $e$ may create far edges, but not unnecessary edges.}
\end{proof}

\changed{Finally, we observe the impact of \trim{} when removing a non-heavy edge in a previously trimmed instance, which follows immediately from the definition of heavy edge.}

\begin{observation}\label{lem:heavy:trim}
    \changed{Let $G, \In, k$ be a trimmed instance, and let $e\in \Gamma(\Vin)$ be a non-heavy edge. Then, the graph obtained by \trim{$G - \{e\}, \In, k$} has at least $\size{E(G)}/2$ edges.}
\end{observation}

Now, we are ready to give our analysis.

\begin{lemma}   
    Let $X$ be a node in the recursion tree of \Cref{alg:k-subtrees}.
    Moreover, let $(G_X, \In_X, k_X)$ be an instance corresponding to $X$.
    Then, $X$ satisfies the PO condition \changed{(Theorem \ref{thm:push-out})} by setting $\alpha = \beta = 5/4$ and $T^*=O(k)$. 
\end{lemma}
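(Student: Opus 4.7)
The plan is to verify the push-out inequality $\bar T(X) \geq \tfrac{5}{4} T(X) - 3(|ch(X)|+1)\,T^*$ directly, using the cost model $T(X) = c^*(m_X + |ch(X)|)$ and $T^* = c^*k$. Let $Y_1, Y_2$ be the two recursive children produced by the \trim{} calls on lines \ref{alg:trim_in} and \ref{alg:trim_ex}, and let $s \geq 0$ count the far-edge solution leaves attached to $X$ by the shortest-path enumeration on line \ref{alg:shortest-paths}, so $|ch(X)| = s + 2$. Each far-solution leaf contributes $T^* = c^*k$ to $\bar T(X)$, while for $Y_1, Y_2$ I would use the generic lower bound $T(Y_i) \geq c^* m_{Y_i}$, which holds because each $Y_i$ is an internal node of $\mathcal T$ and $|ch(Y_i)| \geq 0$.

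Next I apply the two edge-count bounds already established in this section. Since $G - E_{\tt far}$ has no far edges, Lemma \ref{lem:one} gives $m_{Y_1} \geq m_X - |E_{\tt far}|$, while $e$ being non-heavy in $\Gamma_{G - E_{\tt far}}(\Vin)$ lets \Cref{lem:zero} give $m_{Y_2} \geq \tfrac{9}{10}(m_X - |E_{\tt far}|) - 1$. Summing yields $m_{Y_1} + m_{Y_2} \geq \tfrac{19}{10}(m_X - |E_{\tt far}|) - 1$. The accounting step that links the two sides is the inequality $|E_{\tt far}| \leq s$: every far edge $f$ admits a shortest path from $\Vin$ to one of its endpoints in the trimmed input, and together with $f$ that path yields a far-solution enumerated on line \ref{alg:shortest-paths}; by \Cref{lem:tree:sp} each far-solution has a unique such identifying far edge, so distinct far edges give distinct far-solutions.

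Combining, $\bar T(X) \geq s\,c^*k + c^*\bigl(\tfrac{19}{10}(m_X - s) - 1\bigr)$. Substituting this into the PO inequality with $\alpha = \tfrac{5}{4}$ and $\beta = 3$, dividing through by $c^*$, and collecting terms reduces the claim to
\[
\tfrac{13}{20}\,m_X \;+\; \Bigl(4k - \tfrac{63}{20}\Bigr)\, s \;+\; 9k - \tfrac{7}{2} \;\geq\; 0,
\]
which is manifestly non-negative under $m_X \geq 4k$ and $k \geq 1$: the first summand is $\geq \tfrac{13k}{5}$, the $s$-coefficient $4k - \tfrac{63}{20}$ is positive, and the constant term $9k - \tfrac{7}{2}$ is $\geq \tfrac{11}{2}$.

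The main obstacle is pinning down $|E_{\tt far}| \leq s$ cleanly: the whole push-out scheme hinges on charging the edge loss at $Y_1, Y_2$ back to the far-solutions already produced at $X$, so that the enlarged $|ch(X)|$ on the right-hand side is more than compensated by the solution leaves on the left. The specific choice of $\tfrac{1}{10}$ for the heavy-edge threshold is exactly what produces the $\tfrac{19}{10}$ summed edge lower bound and hence the $\tfrac{13}{20}$ slack against $\alpha = \tfrac{5}{4}$: a tighter threshold would not leave enough slack, whereas a looser one would break \Cref{lem:heavy}'s guarantee that such a non-heavy edge $e$ exists at all.
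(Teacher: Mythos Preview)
Your proof follows the same strategy as the paper: bound $m_{Y_1}+m_{Y_2}$ via \Cref{lem:one} and \Cref{lem:zero}, charge the loss $|E_{\tt far}|$ against the far-solution children, and verify the push-out inequality with $\alpha=5/4$, $\beta=3$. Your bookkeeping is in fact tighter and more explicit than the paper's own computation (you correctly carry the $|ch(X)|$ term inside $T(X)$ and give each far-leaf its full weight $c^*k$, and you spell out the injection $|E_{\tt far}|\le s$ via \Cref{lem:tree:sp}), so the approach and the conclusion both match.
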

\begin{proof}

    \changed{Let $F_X$ be the set of far edges in $G_X$.
    The number of children of $X$, i.e., $\size{ch(X)}$, is at least $\size{F_X}+2$ (at least one per far edge, plus two nested recursive calls).
    The graph obtained by \trim{$G - F_X, \In, k$} has at least $\size{E_X} - \size{F_X}$ edges by \Cref{lem:trim:far}, and this gets passed to the children recursive calls. \Cref{lem:heavy} tells us that there is always a non-heavy edge $e$ that we can consider, thus we can give lower bounds to the size of the graph received by the children recursive calls. The child on \Cref{alg:trim_in} adds $e$ so by \Cref{lem:add:trim}  has size $\size{E_X} - \size{F_X}$. The child on \Cref{alg:trim_ex} removes $e$ so by \Cref{lem:heavy:trim} has size $\ge (\size{E_X} - \size{F_X})/2$, each child derived from a far edge is a leaf, which takes $O(k)$ time to simply output their solution, and thus we have at least $\size{F_X}$ children of size $O(k)$ each. Recalling as noted above that the constant factors in the big $O$ notation are denoted by $c^*$, the cost of each recursive call given by \Cref{lem:X:cost}, and that $k\ge 2$ (as $k=1$ is a trivial problem), putting all together we have that the sum of running time of the children $ch(X)$ of $X$ is:}

    \changed{\begin{align*}
        \sum_{Y \in ch(X)}T(Y) & \ge c^*(\size{E_X} - \size{F_X}) + c^*(\size{E_X} - \size{F_X})/2 + c^*\cdot k \cdot \size{F_X}\\
        & = \frac{3}{2}c^*\size{E_X} - \frac{3}{2}c^*\size{F_X} + c^*\cdot k \cdot \size{F_X} \\
        & = \frac{3}{2}c^*\size{E_X} +  c^*(k -\frac{3}{2}) \size{F_X} \\
        & \ge \frac{3}{2}c^*\size{E_X} \\
    \end{align*}}
    
    \changed{We next analyze the term $\alpha T(X) - \beta(\size{ch(X)} + 1)T^*$ as follows. Recall that the number of children of $X$ obtained by enumeration of far edges is at least $|F_X|$, but could be exponentially more, let this number be $|F_X| + \gamma$, with $\gamma\ge 0$, and recall that $X$ has two more children obtained by recursive calls, finally, recall that the cost of a leaf is assumed to be $T^* = O(k) = c^*k$.}
    \changed{\begin{align*}
        & \alpha T(X) - \beta(\size{ch(X)} + 1)T^* \\
        & = \alpha c^* ( |E_X| + |F_X| + \gamma ) - \beta( 2 + |F_X|+ \gamma +1  ) c^*k\\
        & = \frac{5}{4} c^* ( |E_X| + |F_X| - k|F_X| + \gamma - k \gamma -3 )\\
        & = \frac{5}{4} c^* ( |E_X| - (k-1)|F_X| - (k-1)\gamma -3 )\\
        & \le \frac{5}{4} c^*\size{E_X}
    \end{align*}}
    \changed{Where for simplicity we ignored negative factors in the last step. 
    Therefore, we have 
    \begin{align*}
        \sum_{Y \in ch(X)}T(Y) \ge \frac{3}{2}c^* \size{E_X} > \frac{5}{4} c^*\size{E_X} \ge \alpha T(X) - \beta(\size{ch(X)} + 1)T^*
    \end{align*}
    We can thus conclude that the PO condition holds for every recursive node of \Cref{alg:k-subtrees}, with $\alpha = \beta = 5/4$ and $T^*=O(k)$.}
\end{proof}

Therefore, each internal node \changed{satisfies the PO condition} and we obtain the following result. 

\begin{theorem}\label{th:push-out-edge-graphlets}
    \Cref{alg:k-subtrees} runs in total $\order{n + m + k \cdot N}$ time and $\order{n + m}$ space \changed{if $\In \neq \emptyset$}, where $N$ is the number of solutions.
    If the instance received at the root of the recursion tree is already trimmed, \Cref{alg:k-subtrees} runs in $\order{k\cdot N}$ time.
\end{theorem}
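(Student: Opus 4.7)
The space bound $O(n+m)$ is already established in Theorem~\ref{theo:cor}, so the plan focuses on the time bound. The high-level idea is to combine the previous lemma (which shows that every internal node satisfies the push-out condition) with Theorem~\ref{thm:push-out}, add a small adjustment for leaves that produce more than one solution, and finally bound the total number of nodes in the recursion tree $\mathcal T$.

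First, I would fix the leaf cost $T^* = O(k)$. A leaf of $\mathcal T$ occurs either when $\size{\In}=k$ (outputting one solution of $k$ edges in $O(k)$ time) or when $\size{\In}=k-1$ (outputting one solution per edge of $\Gamma(\Vin)$, each of size $k$). To fit cleanly into the framework of Theorem~\ref{thm:push-out}, which measures amortized cost per leaf, I would conceptually refine a batch-leaf of the second type into a tiny subtree with one micro-leaf per emitted solution; each micro-leaf then takes $O(k)$ time and each refined internal node still has $\ge 2$ children and satisfies the PO condition trivially with $\alpha=5/4$, $\beta=3$ (since its cost is $O(k)$ and $T^*=O(k)$). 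After this refinement, every leaf outputs exactly one solution at cost $T^*=O(k)$.

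Next, I would invoke the preceding lemma to assert that every remaining internal node of $\mathcal T$ satisfies the PO condition with $\alpha = 5/4$ and $\beta = 3$: the lemma handles both $\size{E_X} \le 4k$ (directly, by enlarging $\beta$) and $\size{E_X} > 4k$ (by the detailed estimate using Lemmas~\ref{lem:one}, \ref{lem:zero} and Corollary~\ref{cor:trim}). Applying Theorem~\ref{thm:push-out} yields amortized cost $O(T^*) = O(k)$ per node of the (refined) recursion tree.

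Finally, I would count the nodes. By Theorem~\ref{theo:cor} each leaf corresponds to a distinct solution and each internal node has at least two children, so the total number of tree nodes is $O(N)$. Multiplying by the $O(k)$ amortized per-node cost gives $O(k\cdot N)$ total recursive work, proving the second (trimmed-root) part of the theorem. For the general case, the algorithm must additionally trim the root instance once and set up the shortest-paths index used on Line~\ref{alg:shortest-paths}; by Lemma~\ref{lem:tree:time:trim} and the linear-time preprocessing of~\cite[Section 4.1]{conte_et_al:LIPIcs.ISAAC.2023.21}, both cost $O(n+m)$ at the root, yielding the claimed $O(n+m+k\cdot N)$ bound.

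The main obstacle is the mismatch between "per leaf" in Theorem~\ref{thm:push-out} and "per solution" in the statement, caused by the batch-output leaves when $\size{\In}=k-1$; the refinement argument above is the cleanest way to reconcile them without rederiving the PO inequality. Every other ingredient is already in place from the previous lemma, so the remainder of the proof reduces to bookkeeping.
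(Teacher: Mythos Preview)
Your proposal is correct and essentially matches the paper's approach: the paper states this theorem as an immediate corollary of the preceding PO-condition lemma together with Theorem~\ref{thm:push-out} and the space bound from Theorem~\ref{theo:cor}, without giving a separate detailed proof. Your write-up spells out the bookkeeping the paper leaves implicit, including the refinement of the batch-output nodes (both the far-edge enumeration of Line~\ref{alg:shortest-paths} and the $|\In|=k-1$ case) into conceptual leaves so that each leaf corresponds to exactly one solution; this is precisely the convention the paper's analysis is tacitly using when it writes $T(X)=c^*(m_X+|ch(X)|)$ and counts the far-edge solutions among the children of $X$.

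One minor inaccuracy to clean up: the shortest-path enumeration on Line~\ref{alg:shortest-paths} is not a one-time root preprocessing but is executed at \emph{every} internal node; its linear-plus-output cost is already folded into the per-node bound $T(X)=O(m_X+|ch(X)|)$ used in the PO analysis, so it needs no separate accounting. The extra $O(n+m)$ in the untrimmed case comes solely from the one-time cost of trimming the root instance via Lemma~\ref{lem:tree:time:trim}.
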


\subsection{\changed{Generating all edge \texorpdfstring{$k$}{k}-graphlets using  \texorpdfstring{Algorithm~\ref{alg:k-subtrees}}{Algorithm 4}}}
\changed{As we did for the vertex $k$-graphlets, we now combine everything together to show that we can enumerate all edge $k$-graphlets in $O(n+m+k\cdot N)$ total time. 
To this end, we show our instance generation algorithm in \Cref{alg:generating_instances}. It essentially iterates over all edges, using each edge $e_i$ as the first building block of a potential solution; however, as $\EnumEdge$ assumes the input graph to be trimmed, it also performs the trimming operation before calling it. Finally, note how we consider the edges in a \textit{reversed BFS order of E}: this means performing a BFS from an arbitrary starting vertex $v$, writing the edges down in the order in which they are found, and finally reversing the sequence. This order has the significant advantage that every \textit{suffix} of it corresponds to a connected graph (as all remaining edges $e_i,\ldots,e_m$ are still able to reach $v$). This means that every instance generated will have at least $k$ edges reachable from $e_i$, except for when less than $k$ edges are left in the graph (which is why we stop at $i=m-k+1$).
}
By combining \Cref{alg:k-subtrees,alg:generating_instances}, we obtain the following theorem.

\begin{algorithm}[tb]
\DontPrintSemicolon
    \caption{\changed{Generate all edge $k$-graphlets of a graph $G$ using \Cref{alg:k-subtrees} as subroutine.}}
    \label{alg:generating_instances}
    \KwIn{A connected graph $G = (V, E)$, a positive integer $k$}
    \KwOut{all edge $k$-graphlets.}
    \SetKwFunction{GenIns}{Gen-Ins}{}{}
    \Fn{\GenIns{$G = (V, E), k$}}{
        \changed{Let $e_1, \ldots, e_m$ be a reversed BFS order of $E$} \;
        $G_0 \gets G$\;
        \For(\tcp*[h]{We can skip the final $k-1$ edges}){\changed{$i = 1, 2, \ldots, (m-k+1)$}}{ 
            \changed{$\langle H, \In\rangle \gets$ \trim{$G_{i-1}$,$\{e_i\}$,$k$}}\;
            \changed{\EnumEdge{$H,\In,k$}}\;
            \changed{$G_i \gets G_{i-1} - \set{e_i}$}\label{line:ei:removal}\;                 
        }
    }
\end{algorithm}

\begin{theorem}\label{thm:main}
    Let $G$ be a connected graph and $k$ be a positive integer. 
    We can enumerate all edge $k$-graphlets in $\order{n + m + k\cdot N}$ time and $\order{n + m}$ space, where $N$ is the number of solutions.
\end{theorem}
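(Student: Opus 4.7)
The plan is to reduce the enumeration of all edge $k$-graphlets to a sequence of invocations of \EnumEdge{} (\Cref{alg:k-subtrees}), using a deduplication scheme analogous to \Cref{alg:graphlet_instances} but operating on edges. Fix an arbitrary ordering $e_1, \ldots, e_m$ of $E(G)$; for each $i$ enumerate exactly those edge $k$-graphlets whose minimum-index edge is $e_i$, i.e., those containing $e_i$ but avoiding $e_1, \ldots, e_{i-1}$. Every edge $k$-graphlet of $G$ is therefore emitted once, which immediately delivers correctness. For index $i$, the corresponding call is on $G_i = G - \{e_1, \ldots, e_{i-1}\}$ with $\In = \{e_i\}$, preceded by \trim{$G_i, \{e_i\}, \emptyset, k$} so that the recursion starts at an already-trimmed root and enjoys the stronger $O(k\cdot N_i)$ bound from the previous theorem, where $N_i$ is the number of solutions at index $i$.

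The main obstacle is ensuring that the preprocessing across the $m$ indices is subsumed in the announced $O(n + m + k\cdot N)$ budget rather than blowing up to $\Omega(m^2)$. The graphs $G_i$ are maintained incrementally, requiring only $O(1)$ work between consecutive indices. The more delicate ingredient is handling the indices $i$ with $N_i = 0$: for these, I perform the trimming BFS from the endpoints of $e_i$ but truncate as soon as $k$ edges are accumulated or the component is exhausted. If the component of $e_i$ in $G_i$ is exhausted with fewer than $k$ edges, then no edge $k$-graphlet can lie there, so every edge of that component is tagged ``dead'' and subsequent indices whose edge falls in an already-dead component are dismissed in $O(1)$ via a simple flag lookup, exactly mirroring the $r_j$ array of \Cref{alg:graphlet_instances}. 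Since each edge is tagged at most once across the whole execution, the total bookkeeping for dead instances is $O(n + m)$.

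For the remaining (productive) indices, the trimming produces a trimmed subinstance of size $|E(H_i)|$ in $O(|E(H_i)|)$ time by \Cref{lem:tree:time:trim}, after which \EnumEdge{} runs in $O(k\cdot N_i)$ time by the previous theorem. The external trimming overhead is absorbed into this bound because it can be regarded as a duplicate of the work that \EnumEdge{} itself performs at its root (far-edge computation plus mandatory-edge scan), and that root cost is already discharged onto the leaves by the push-out amortization developed in Section~4.3; doubling it changes only the hidden constant. Summing, the productive indices contribute $O(k \sum_i N_i) = O(k\cdot N)$, the dead bookkeeping and the incremental maintenance of the $G_i$'s contribute $O(n + m)$, and the initial setup (reading $G$, producing the ordering, initializing the flags and adjacency structures) costs $O(n + m)$, giving the announced total time of $O(n + m + k\cdot N)$. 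The space bound $O(n + m)$ follows from \Cref{theo:cor} together with the observation that the instance generator adds only a linear-size dead-flag array and the incremental representation of $G_i$, both of which use $O(n + m)$ memory.
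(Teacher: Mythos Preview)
Your argument is correct and follows the same high-level scaffolding as the paper: generate a sequence of residual instances, detect and skip ``dead'' ones via a flag array so that their combined cost is linear, trim each live instance, and invoke \EnumEdge{} on the trimmed root so that the $O(k\cdot N_i)$ bound from the preceding theorem applies. The one substantive difference is that you iterate over \emph{edges} (removing $e_1,\ldots,e_{i-1}$ and seeding $\In=\{e_i\}$), whereas the paper iterates over \emph{vertices} (removing $v_1,\ldots,v_{i-1}$ and anchoring the search at $v_i$; see \Cref{alg:generating_instances}). Your choice is arguably the more natural one here, since \EnumEdge{} is parameterised by an edge set $\In$, so starting with a single edge avoids the indirection of translating ``contains vertex $v_i$'' into edge constraints --- a point the paper's proof glosses over by writing $\In=\emptyset$. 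Conversely, the vertex-based scheme reuses the graphlet instance generator almost verbatim and has only $n$ rather than $m$ outer iterations, a cosmetic difference that does not affect the asymptotics. Your absorption of the external trimming cost into the root node of \EnumEdge{} via push-out (``doubling it changes only the hidden constant'') is exactly the right justification; the paper makes the same move implicitly by appealing to \Cref{lem:tree:time:trim} and the trimmed-root clause of the previous theorem.
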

\begin{proof}
    \changed{The proof follows from the same argument used in Theorem~\ref{th:wrapper_graphlets}. 
    As computing the reverse BFS order requires performing a BFS, which takes $O(n+m)$ time just once.
    The removal of $e_i$ on \Cref{line:ei:removal} takes constant time, so the total contribution of this line is $O(m)$ time. Finally, we need to call \trim{$G_{i-1}$,$\{e_i\}$,$k$}, which produces the pair $\langle H, \In\rangle$ in $O(|E(H)|)$ time by \Cref{lem:tree:time:trim}: we can actually just charge this cost onto the recursive call $X =$ \EnumEdge{$H,\In,k$} generated, as by \Cref{lem:X:cost}, $X$ already takes $O(|E(H)|+|ch(X)|)$ time, so this does not alter its complexity. 
    Finally, Theorem~\ref{th:push-out-edge-graphlets} shows that \EnumEdge{} respects the PO condition with $T^*=O(k)$, which by \Cref{thm:push-out} means that it achieves $O(k)$ amortized time per graphlet enumerated. The running time is thus $O(n+m+k\cdot N)$ total time, where $N$ is the number of edge $k$-graphlets contained in $G$.}
\end{proof}

\subsection{Modifications for \texorpdfstring{$k$}{k}-subtrees enumeration}
\changed{In this section, we propose an algorithm for enumerating all $k$-subtrees in a graph, obtained by small modifications of \Cref{alg:k-subtrees}.}
\changed{Let $X$ be a node of the recursion tree as seen in the previous sections, $G_X$ be the associated graph, $\In_X$ be a set of edges, and $k$ be an integer.}
\changed{The main difference between this algorithm, and \Cref{alg:k-subtrees} lies in the partition strategy.
Indeed, for an edge $e = \set{u, v} \in \Gamma_{G_X}(\In_X)$ such that $u$ is incident to any edge of $\In_X$, 
an edge that connects vertices $v$ and a vertex in $G_X[\In_X]$ cannot be added to $\In_X \cup \set{e}$ (otherwise it would form a cycle), hence we remove such edges from $G_X$.
The correctness of this algorithm can be shown using the same arguments adopted in Section~\ref{sec:es}.
We can also prove that the PO condition holds for this algorithm: in particular, when we add an edge $e = \{u,v\}$ to $\In_X$, and there exist many edges that connect $v$ and another vertex in $G_X[\In_X]$, we can generate all subproblems (i.e. children nodes in the recursion tree) in linear time.
Therefore, we can enumerate also $k$-subtrees in $O(k)$ amortized time, as stated in the following theorem.
}
%
\begin{theorem}
    Let $G$ be a graph and $k$ be an integer. 
    We can enumerate all $k$-subtrees of $G$ in $\order{n + m + k\cdot N}$ time and $\order{n + m}$ space, where $N$ is the number of $k$-subtrees.
\end{theorem}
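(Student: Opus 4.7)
The plan is to mirror the proof for edge $k$-graphlets (\Cref{thm:main}) and argue that the few modifications needed to enforce acyclicity do not break either correctness or the push-out amortization. The algorithm is obtained from \Cref{alg:k-subtrees} by adding, each time an edge $e = \{u,v\}$ with $u \in \Vin$ is committed to $\In$, a cleanup step that deletes from the working graph every edge connecting $v$ to a vertex of $\Vin$ other than $u$; such edges would close a cycle with $G[\In \cup \set{e}]$ and therefore cannot belong to any $k$-subtree extending $\In \cup \set{e}$. The branching edge $e$ is still chosen as a non-heavy edge in $\Gamma(\Vin)$ of the trimmed graph, and the partition into $\mathcal S_{\tt far}$ and $\mathcal S_{\tt near}$ is unchanged. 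Instances at the top level are produced exactly as in \Cref{alg:generating_instances}.

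For correctness, I would reproduce the argument of \Cref{theo:cor} verbatim on the binary partition, checking three points: (i) the deleted cycle-creating edges are \emph{safe} in the same sense that unnecessary edges are safe in the trimming lemma, since their inclusion in any descendant subtree would produce a cycle; (ii) the \emph{far-edge} enumeration step (\cref{alg:shortest-paths}) still outputs only $k$-subtrees, because a shortest path from $\Vin$ to a far edge has no internal chords within the path and is disjoint from $G[\In]$ except at its start, and any unwanted chord is precisely the kind of edge already removed by the trimming/cleanup; (iii) when $e$ is mandatory, the cleanup is automatic by definition of mandatoriness. The existence of a non-heavy edge in $\Gamma(\Vin)$ still follows from \Cref{lem:heavy}, whose proof only uses shortest-path distances and is insensitive to the acyclicity restriction.

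For complexity, I would invoke the push-out condition with the same constants $\alpha = 5/4$ and $\beta = 3$ used for \Cref{alg:k-subtrees}. The per-node cost is still $O(m_X + \size{ch(X)})$, because the extra cleanup touches at most $|\In| \le k$ neighbors of $v$ in $\Vin$ and can be folded into the linear-time trimming of \Cref{lem:tree:time:trim}. The key size bounds needed in the PO inequality, \Cref{lem:one} and \Cref{lem:zero}, survive with only a $-O(k)$ additive slack on the ``include $e$'' child, since adding $e$ to $\In$ can delete at most $|\In|+1 \le k$ cycle-closing edges on top of the usual trimming; when $m_X > 4k$ this slack is absorbed into the $\beta (\size{ch(X)}+1) T^*$ term exactly as in the edge-graphlet analysis. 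Leaves cost $O(k)$ as before, and instance generation via the analogue of \Cref{alg:generating_instances} contributes only $O(n + m + k \cdot N)$, yielding the claimed $O(n+m+k\cdot N)$ time and $O(n+m)$ space bounds.

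The main obstacle I anticipate is the corner case flagged in the preceding discussion: when $u \in \Vin$ has many parallel ``triangle-closing'' neighbors into $\Vin$, committing $e=\{u,v\}$ could force the deletion of $\Theta(|\Vin|)$ edges in a single step, and one must verify that this does not invalidate the $9 m_X/10 - 1$ lower bound for the other child used in \Cref{lem:zero}. I would handle this exactly as the authors suggest, namely by treating this degenerate branch as a base case in which all subproblems are generated in linear time, so that its total cost is charged to the constant amortized budget per solution rather than to the push-out recurrence, preserving the $O(k)$ amortized time per $k$-subtree.
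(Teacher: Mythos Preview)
Your proposal is correct and takes essentially the same approach as the paper: the paper's own argument for this theorem is a brief sketch that introduces the same cycle-closing edge deletion when committing $e=\{u,v\}$ to $\In$, asserts that ``the PO condition clearly holds,'' and handles the many-parallel-edges corner case by noting one ``can generate all subproblems in linear time.'' Your write-up in fact supplies more detail than the paper does; the only slip is in your final paragraph, where the $9m_X/10-1$ bound of \Cref{lem:zero} belongs to the \emph{exclude} child (unaffected by the cleanup), whereas the $-O(k)$ slack you correctly analyzed earlier lands on the \emph{include} child governed by \Cref{lem:one}.
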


\section{Conclusions and Future Work}
We provided new algorithms for enumerating both vertex-induced and edge-induced subgraphs/subtrees in any simple, undirected graph, improving the state of the art for both problems.
This improvement over the fastest known algorithms in the literature was possible by the use of both traditional amortized analysis and the more recent push-out amortized analysis.
Our algorithms are the first to have time complexity depending \emph{solely} on the size of the subgraph to be enumerated, \changed{i.e., $O(k^2)$ amortized time for $k$-graphlets and $O(k)$ amortized time for edge $k$-graphlets,} in contrast with existing approaches that also include the size of the graph or its maximum degree in their complexity bounds.

The question on whether $O(k)$ amortized time is attainable for $k$-graphlets is still open, in contrast to what we did with edge $k$-graphlets/subtrees, because a $k$-graphlet can have a quadratic number of edges, making this open problem challenging.

\subsection*{Acknowledgments}
\changed{We wish to thank the anonymous reviewers for the detailed feedback that allowed us improve the clarity of the paper.
The work was partially supported by MUR PRIN 2022 project EXPAND: scalable algorithms for EXPloratory Analyses of heterogeneous and dynamic Networked Data (\#2022TS4Y3N). 
}

\bibliography{kurita,bib2}


\begin{thebibliography}{46}
\ifx \bisbn   \undefined \def \bisbn  #1{ISBN #1}\fi
\ifx \binits  \undefined \def \binits#1{#1}\fi
\ifx \bauthor  \undefined \def \bauthor#1{#1}\fi
\ifx \batitle  \undefined \def \batitle#1{#1}\fi
\ifx \bjtitle  \undefined \def \bjtitle#1{#1}\fi
\ifx \bvolume  \undefined \def \bvolume#1{\textbf{#1}}\fi
\ifx \byear  \undefined \def \byear#1{#1}\fi
\ifx \bissue  \undefined \def \bissue#1{#1}\fi
\ifx \bfpage  \undefined \def \bfpage#1{#1}\fi
\ifx \blpage  \undefined \def \blpage #1{#1}\fi
\ifx \burl  \undefined \def \burl#1{\textsf{#1}}\fi
\ifx \doiurl  \undefined \def \doiurl#1{\url{https://doi.org/#1}}\fi
\ifx \betal  \undefined \def \betal{\textit{et al.}}\fi
\ifx \binstitute  \undefined \def \binstitute#1{#1}\fi
\ifx \binstitutionaled  \undefined \def \binstitutionaled#1{#1}\fi
\ifx \bctitle  \undefined \def \bctitle#1{#1}\fi
\ifx \beditor  \undefined \def \beditor#1{#1}\fi
\ifx \bpublisher  \undefined \def \bpublisher#1{#1}\fi
\ifx \bbtitle  \undefined \def \bbtitle#1{#1}\fi
\ifx \bedition  \undefined \def \bedition#1{#1}\fi
\ifx \bseriesno  \undefined \def \bseriesno#1{#1}\fi
\ifx \blocation  \undefined \def \blocation#1{#1}\fi
\ifx \bsertitle  \undefined \def \bsertitle#1{#1}\fi
\ifx \bsnm \undefined \def \bsnm#1{#1}\fi
\ifx \bsuffix \undefined \def \bsuffix#1{#1}\fi
\ifx \bparticle \undefined \def \bparticle#1{#1}\fi
\ifx \barticle \undefined \def \barticle#1{#1}\fi
\bibcommenthead
\ifx \bconfdate \undefined \def \bconfdate #1{#1}\fi
\ifx \botherref \undefined \def \botherref #1{#1}\fi
\ifx \url \undefined \def \url#1{\textsf{#1}}\fi
\ifx \bchapter \undefined \def \bchapter#1{#1}\fi
\ifx \bbook \undefined \def \bbook#1{#1}\fi
\ifx \bcomment \undefined \def \bcomment#1{#1}\fi
\ifx \oauthor \undefined \def \oauthor#1{#1}\fi
\ifx \citeauthoryear \undefined \def \citeauthoryear#1{#1}\fi
\ifx \endbibitem  \undefined \def \endbibitem {}\fi
\ifx \bconflocation  \undefined \def \bconflocation#1{#1}\fi
\ifx \arxivurl  \undefined \def \arxivurl#1{\textsf{#1}}\fi
\csname PreBibitemsHook\endcsname

\bibitem[\protect\citeauthoryear{Lacroix et~al.}{2006}]{lacroix2006motif}
\begin{barticle}
\bauthor{\bsnm{Lacroix}, \binits{V.}},
\bauthor{\bsnm{Fernandes}, \binits{C.G.}},
\bauthor{\bsnm{Sagot}, \binits{M.-F.}}:
\batitle{Motif search in graphs: application to metabolic networks}.
\bjtitle{IEEE/ACM transactions on computational biology and bioinformatics}
\bvolume{3}(\bissue{4}),
\bfpage{360}--\blpage{368}
(\byear{2006})
\end{barticle}
\endbibitem

\bibitem[\protect\citeauthoryear{Ciriello and Guerra}{2008}]{ciriello2008review_motifs}
\begin{barticle}
\bauthor{\bsnm{Ciriello}, \binits{G.}},
\bauthor{\bsnm{Guerra}, \binits{C.}}:
\batitle{{A review on models and algorithms for motif discovery in protein–protein interaction networks}}.
\bjtitle{Briefings in Functional Genomics}
\bvolume{7}(\bissue{2}),
\bfpage{147}--\blpage{156}
(\byear{2008})
\doiurl{10.1093/bfgp/eln015}
{\href{https://arxiv.org/abs/https://academic.oup.com/bfg/article-pdf/7/2/147/655985/eln015.pdf}{{https://academic.oup.com/bfg/article-pdf/7/2/147/655985/eln015.pdf}}}
\end{barticle}
\endbibitem

\bibitem[\protect\citeauthoryear{Yu et~al.}{2020}]{yu2020motif_survey}
\begin{barticle}
\bauthor{\bsnm{Yu}, \binits{S.}},
\bauthor{\bsnm{Feng}, \binits{Y.}},
\bauthor{\bsnm{Zhang}, \binits{D.}},
\bauthor{\bsnm{Bedru}, \binits{H.D.}},
\bauthor{\bsnm{Xu}, \binits{B.}},
\bauthor{\bsnm{Xia}, \binits{F.}}:
\batitle{Motif discovery in networks: A survey}.
\bjtitle{Computer Science Review}
\bvolume{37},
\bfpage{100267}
(\byear{2020})
\end{barticle}
\endbibitem

\bibitem[\protect\citeauthoryear{Fortunato}{2010}]{FORTUNATO_community}
\begin{barticle}
\bauthor{\bsnm{Fortunato}, \binits{S.}}:
\batitle{Community detection in graphs}.
\bjtitle{Physics Reports}
\bvolume{486}(\bissue{3}),
\bfpage{75}--\blpage{174}
(\byear{2010})
\doiurl{10.1016/j.physrep.2009.11.002}
\end{barticle}
\endbibitem

\bibitem[\protect\citeauthoryear{Pržulj et~al.}{2004}]{10.1093/bioinformatics/bth436}
\begin{barticle}
\bauthor{\bsnm{Pržulj}, \binits{N.}},
\bauthor{\bsnm{Corneil}, \binits{D.G.}},
\bauthor{\bsnm{Jurisica}, \binits{I.}}:
\batitle{{Modeling interactome: scale-free or geometric?}}
\bjtitle{Bioinformatics}
\bvolume{20}(\bissue{18}),
\bfpage{3508}--\blpage{3515}
(\byear{2004})
\doiurl{10.1093/bioinformatics/bth436}
{\href{https://arxiv.org/abs/https://academic.oup.com/bioinformatics/article-pdf/20/18/3508/48906619/bioinformatics\_20\_18\_3508.pdf}{{https://academic.oup.com/bioinformatics/article-pdf/20/18/3508/48906619/bioinformatics\_20\_18\_3508.pdf}}}
\end{barticle}
\endbibitem

\bibitem[\protect\citeauthoryear{Shervashidze et~al.}{2009}]{graphlet_kernels}
\begin{bchapter}
\bauthor{\bsnm{Shervashidze}, \binits{N.}},
\bauthor{\bsnm{Vishwanathan}, \binits{S.}},
\bauthor{\bsnm{Petri}, \binits{T.}},
\bauthor{\bsnm{Mehlhorn}, \binits{K.}},
\bauthor{\bsnm{Borgwardt}, \binits{K.}}:
\bctitle{Efficient graphlet kernels for large graph comparison}.
In: \bbtitle{Artificial Intelligence and Statistics},
pp. \bfpage{488}--\blpage{495}
(\byear{2009}).
\bcomment{PMLR}
\end{bchapter}
\endbibitem

\bibitem[\protect\citeauthoryear{Kriege et~al.}{2020}]{Kriege2020graphkernels}
\begin{barticle}
\bauthor{\bsnm{Kriege}, \binits{N.M.}},
\bauthor{\bsnm{Johansson}, \binits{F.D.}},
\bauthor{\bsnm{Morris}, \binits{C.}}:
\batitle{A survey on graph kernels}.
\bjtitle{Applied Network Science}
\bvolume{5}(\bissue{1}),
\bfpage{6}
(\byear{2020})
\doiurl{10.1007/s41109-019-0195-3}
\end{barticle}
\endbibitem

\bibitem[\protect\citeauthoryear{Windels et~al.}{2022}]{windels2022graphlet}
\begin{barticle}
\bauthor{\bsnm{Windels}, \binits{S.F.}},
\bauthor{\bsnm{Malod-Dognin}, \binits{N.}},
\bauthor{\bsnm{Pr{\v{z}}ulj}, \binits{N.}}:
\batitle{Graphlet eigencentralities capture novel central roles of genes in pathways}.
\bjtitle{PloS one}
\bvolume{17}(\bissue{1}),
\bfpage{0261676}
(\byear{2022})
\end{barticle}
\endbibitem

\bibitem[\protect\citeauthoryear{Apar{\'\i}cio et~al.}{2019}]{aparicio2019finding}
\begin{bchapter}
\bauthor{\bsnm{Apar{\'\i}cio}, \binits{D.}},
\bauthor{\bsnm{Ribeiro}, \binits{P.}},
\bauthor{\bsnm{Silva}, \binits{F.}},
\bauthor{\bsnm{Silva}, \binits{J.}}:
\bctitle{Finding dominant nodes using graphlets}.
In: \bbtitle{COMPLEX NETWORKS},
pp. \bfpage{77}--\blpage{89}
(\byear{2019}).
\bcomment{Springer}
\end{bchapter}
\endbibitem

\bibitem[\protect\citeauthoryear{Klymko et~al.}{2014}]{klymko2014using}
\begin{botherref}
\oauthor{\bsnm{Klymko}, \binits{C.}},
\oauthor{\bsnm{Gleich}, \binits{D.}},
\oauthor{\bsnm{Kolda}, \binits{T.G.}}:
Using triangles to improve community detection in directed networks.
arXiv preprint arXiv:1404.5874
(2014)
\end{botherref}
\endbibitem

\bibitem[\protect\citeauthoryear{Prat-P\'{e}rez et~al.}{2012}]{shaping_communities_triangles}
\begin{bchapter}
\bauthor{\bsnm{Prat-P\'{e}rez}, \binits{A.}},
\bauthor{\bsnm{Dominguez-Sal}, \binits{D.}},
\bauthor{\bsnm{Brunat}, \binits{J.M.}},
\bauthor{\bsnm{Larriba-Pey}, \binits{J.-L.}}:
\bctitle{Shaping communities out of triangles}.
In: \bbtitle{Proceedings of the 21st ACM International Conference on Information and Knowledge Management}.
\bsertitle{CIKM '12},
pp. \bfpage{1677}--\blpage{1681}.
\bpublisher{Association for Computing Machinery},
\blocation{New York, NY, USA}
(\byear{2012}).
\doiurl{10.1145/2396761.2398496} .
\burl{https://doi.org/10.1145/2396761.2398496}
\end{bchapter}
\endbibitem

\bibitem[\protect\citeauthoryear{Friggeri et~al.}{2011}]{triangles_social_cohesion}
\begin{bchapter}
\bauthor{\bsnm{Friggeri}, \binits{A.}},
\bauthor{\bsnm{Chelius}, \binits{G.}},
\bauthor{\bsnm{Fleury}, \binits{E.}}:
\bctitle{Triangles to capture social cohesion}.
In: \bbtitle{2011 IEEE Third International Conference on Privacy, Security, Risk and Trust and 2011 IEEE Third International Conference on Social Computing},
pp. \bfpage{258}--\blpage{265}
(\byear{2011}).
\doiurl{10.1109/PASSAT/SocialCom.2011.169}
\end{bchapter}
\endbibitem

\bibitem[\protect\citeauthoryear{Jabbour et~al.}{2018}]{latent_triangle}
\begin{bchapter}
\bauthor{\bsnm{Jabbour}, \binits{S.}},
\bauthor{\bsnm{Mhadbhi}, \binits{N.}},
\bauthor{\bsnm{Raddaoui}, \binits{B.}},
\bauthor{\bsnm{Sais}, \binits{L.}}:
\bctitle{Triangle-driven community detection in large graphs using propositional satisfiability}.
In: \bbtitle{2018 IEEE 32nd International Conference on Advanced Information Networking and Applications (AINA)},
pp. \bfpage{437}--\blpage{444}
(\byear{2018}).
\doiurl{10.1109/AINA.2018.00072}
\end{bchapter}
\endbibitem

\bibitem[\protect\citeauthoryear{Chin et~al.}{2018}]{chin2018subtrees}
\begin{barticle}
\bauthor{\bsnm{Chin}, \binits{A.J.}},
\bauthor{\bsnm{Gordon}, \binits{G.}},
\bauthor{\bsnm{MacPhee}, \binits{K.J.}},
\bauthor{\bsnm{Vincent}, \binits{C.}}:
\batitle{Subtrees of graphs}.
\bjtitle{Journal of Graph Theory}
\bvolume{89}(\bissue{4}),
\bfpage{413}--\blpage{438}
(\byear{2018})
\doiurl{10.1002/jgt.22359}
{\href{https://arxiv.org/abs/https://onlinelibrary.wiley.com/doi/pdf/10.1002/jgt.22359}{{https://onlinelibrary.wiley.com/doi/pdf/10.1002/jgt.22359}}}
\end{barticle}
\endbibitem

\bibitem[\protect\citeauthoryear{Zhang et~al.}{2007}]{subtree_graph_indexing}
\begin{bchapter}
\bauthor{\bsnm{Zhang}, \binits{S.}},
\bauthor{\bsnm{Hu}, \binits{M.}},
\bauthor{\bsnm{Yang}, \binits{J.}}:
\bctitle{Treepi: A novel graph indexing method}.
In: \bbtitle{2007 IEEE 23rd International Conference on Data Engineering},
pp. \bfpage{966}--\blpage{975}
(\byear{2007}).
\doiurl{10.1109/ICDE.2007.368955}
\end{bchapter}
\endbibitem

\bibitem[\protect\citeauthoryear{Gaüzère et~al.}{2012}]{GAUZERE_tree_kernel}
\begin{barticle}
\bauthor{\bsnm{Gaüzère}, \binits{B.}},
\bauthor{\bsnm{Brun}, \binits{L.}},
\bauthor{\bsnm{Villemin}, \binits{D.}}:
\batitle{Two new graphs kernels in chemoinformatics}.
\bjtitle{Pattern Recognition Letters}
\bvolume{33}(\bissue{15}),
\bfpage{2038}--\blpage{2047}
(\byear{2012})
\doiurl{10.1016/j.patrec.2012.03.020} .
\bcomment{Graph-Based Representations in Pattern Recognition}
\end{barticle}
\endbibitem

\bibitem[\protect\citeauthoryear{Sun et~al.}{2022}]{SUN2022127404}
\begin{barticle}
\bauthor{\bsnm{Sun}, \binits{D.}},
\bauthor{\bsnm{Li}, \binits{L.}},
\bauthor{\bsnm{Liu}, \binits{K.}},
\bauthor{\bsnm{Wang}, \binits{H.}},
\bauthor{\bsnm{Yang}, \binits{Y.}}:
\batitle{Enumeration of subtrees of planar two-tree networks}.
\bjtitle{Applied Mathematics and Computation}
\bvolume{434},
\bfpage{127404}
(\byear{2022})
\doiurl{10.1016/j.amc.2022.127404}
\end{barticle}
\endbibitem

\bibitem[\protect\citeauthoryear{Wasa}{2016}]{Wasa16}
\begin{botherref}
\oauthor{\bsnm{Wasa}, \binits{K.}}:
Enumeration of enumeration algorithms.
CoRR
\textbf{abs/1605.05102}
(2016)
{\href{https://arxiv.org/abs/1605.05102}{{1605.05102}}}
\end{botherref}
\endbibitem

\bibitem[\protect\citeauthoryear{Uno}{2015}]{DBLP:conf/wads/Uno15}
\begin{bchapter}
\bauthor{\bsnm{Uno}, \binits{T.}}:
\bctitle{Constant time enumeration by amortization}.
In: \bbtitle{Algorithms and Data Structures - 14th International Symposium, {WADS} 2015, Victoria, BC, Canada, August 5-7, 2015. Proceedings},
pp. \bfpage{593}--\blpage{605}
(\byear{2015}).
\doiurl{10.1007/978-3-319-21840-3\_49} .
\burl{https://doi.org/10.1007/978-3-319-21840-3\_49}
\end{bchapter}
\endbibitem

\bibitem[\protect\citeauthoryear{Niedermeier and Rossmanith}{2000}]{Niedermeier:IPL:2000}
\begin{barticle}
\bauthor{\bsnm{Niedermeier}, \binits{R.}},
\bauthor{\bsnm{Rossmanith}, \binits{P.}}:
\batitle{A general method to speed up fixed-parameter-tractable algorithms}.
\bjtitle{Inf. Process. Lett.}
\bvolume{73}(\bissue{3-4}),
\bfpage{125}--\blpage{129}
(\byear{2000})
\end{barticle}
\endbibitem

\bibitem[\protect\citeauthoryear{Damaschke}{2006}]{Damaschke:TCS:2006}
\begin{barticle}
\bauthor{\bsnm{Damaschke}, \binits{P.}}:
\batitle{Parameterized enumeration, transversals, and imperfect phylogeny reconstruction}.
\bjtitle{Theoretical Computer Science}
\bvolume{351}(\bissue{3}),
\bfpage{337}--\blpage{350}
(\byear{2006})
\doiurl{10.1016/j.tcs.2005.10.004} .
\bcomment{Parameterized and Exact Computation}
\end{barticle}
\endbibitem

\bibitem[\protect\citeauthoryear{Wernicke}{2005}]{wernicke_esu}
\begin{bchapter}
\bauthor{\bsnm{Wernicke}, \binits{S.}}:
\bctitle{A faster algorithm for detecting network motifs}.
In: \beditor{\bsnm{Casadio}, \binits{R.}},
\beditor{\bsnm{Myers}, \binits{G.}} (eds.)
\bbtitle{Algorithms in Bioinformatics},
pp. \bfpage{165}--\blpage{177}.
\bpublisher{Springer},
\blocation{Berlin, Heidelberg}
(\byear{2005})
\end{bchapter}
\endbibitem

\bibitem[\protect\citeauthoryear{Wernicke and Rasche}{2006}]{fanmod}
\begin{barticle}
\bauthor{\bsnm{Wernicke}, \binits{S.}},
\bauthor{\bsnm{Rasche}, \binits{F.}}:
\batitle{{FANMOD: a tool for fast network motif detection}}.
\bjtitle{Bioinformatics}
\bvolume{22}(\bissue{9}),
\bfpage{1152}--\blpage{1153}
(\byear{2006})
\doiurl{10.1093/bioinformatics/btl038}
{\href{https://arxiv.org/abs/https://academic.oup.com/bioinformatics/article-pdf/22/9/1152/48840867/bioinformatics\_22\_9\_1152.pdf}{{https://academic.oup.com/bioinformatics/article-pdf/22/9/1152/48840867/bioinformatics\_22\_9\_1152.pdf}}}
\end{barticle}
\endbibitem

\bibitem[\protect\citeauthoryear{Pinar et~al.}{2017}]{escape_algorithm}
\begin{bchapter}
\bauthor{\bsnm{Pinar}, \binits{A.}},
\bauthor{\bsnm{Seshadhri}, \binits{C.}},
\bauthor{\bsnm{Vishal}, \binits{V.}}:
\bctitle{Escape: Efficiently counting all 5-vertex subgraphs}.
In: \bbtitle{Proceedings of the 26th International Conference on World Wide Web}.
\bsertitle{WWW '17},
pp. \bfpage{1431}--\blpage{1440}.
\bpublisher{International World Wide Web Conferences Steering Committee},
\blocation{Republic and Canton of Geneva, CHE}
(\byear{2017}).
\doiurl{10.1145/3038912.3052597} .
\burl{https://doi.org/10.1145/3038912.3052597}
\end{bchapter}
\endbibitem

\bibitem[\protect\citeauthoryear{Paredes and Ribeiro}{2013}]{fase_algorithm}
\begin{bchapter}
\bauthor{\bsnm{Paredes}, \binits{P.}},
\bauthor{\bsnm{Ribeiro}, \binits{P.}}:
\bctitle{Towards a faster network-centric subgraph census}.
In: \bbtitle{2013 IEEE/ACM International Conference on Advances in Social Networks Analysis and Mining (ASONAM 2013)},
pp. \bfpage{264}--\blpage{271}
(\byear{2013}).
\doiurl{10.1145/2492517.2492535}
\end{bchapter}
\endbibitem

\bibitem[\protect\citeauthoryear{Kashani et~al.}{2009}]{Kashani2009_kavosh}
\begin{barticle}
\bauthor{\bsnm{Kashani}, \binits{Z.R.M.}},
\bauthor{\bsnm{Ahrabian}, \binits{H.}},
\bauthor{\bsnm{Elahi}, \binits{E.}},
\bauthor{\bsnm{Nowzari-Dalini}, \binits{A.}},
\bauthor{\bsnm{Ansari}, \binits{E.S.}},
\bauthor{\bsnm{Asadi}, \binits{S.}},
\bauthor{\bsnm{Mohammadi}, \binits{S.}},
\bauthor{\bsnm{Schreiber}, \binits{F.}},
\bauthor{\bsnm{Masoudi-Nejad}, \binits{A.}}:
\batitle{Kavosh: a new algorithm for finding network motifs}.
\bjtitle{BMC Bioinformatics}
\bvolume{10}(\bissue{1}),
\bfpage{318}
(\byear{2009})
\doiurl{10.1186/1471-2105-10-318}
\end{barticle}
\endbibitem

\bibitem[\protect\citeauthoryear{Melckenbeeck et~al.}{2017}]{10.1093/bioinformatics/jesse}
\begin{barticle}
\bauthor{\bsnm{Melckenbeeck}, \binits{I.}},
\bauthor{\bsnm{Audenaert}, \binits{P.}},
\bauthor{\bsnm{Colle}, \binits{D.}},
\bauthor{\bsnm{Pickavet}, \binits{M.}}:
\batitle{{Efficiently counting all orbits of graphlets of any order in a graph using autogenerated equations}}.
\bjtitle{Bioinformatics}
\bvolume{34}(\bissue{8}),
\bfpage{1372}--\blpage{1380}
(\byear{2017})
\doiurl{10.1093/bioinformatics/btx758}
{\href{https://arxiv.org/abs/https://academic.oup.com/bioinformatics/article-pdf/34/8/1372/48915964/bioinformatics\_34\_8\_1372.pdf}{{https://academic.oup.com/bioinformatics/article-pdf/34/8/1372/48915964/bioinformatics\_34\_8\_1372.pdf}}}
\end{barticle}
\endbibitem

\bibitem[\protect\citeauthoryear{Ribeiro et~al.}{2021}]{graphlet_survey_2021}
\begin{botherref}
\oauthor{\bsnm{Ribeiro}, \binits{P.}},
\oauthor{\bsnm{Paredes}, \binits{P.}},
\oauthor{\bsnm{Silva}, \binits{M.E.P.}},
\oauthor{\bsnm{Aparicio}, \binits{D.}},
\oauthor{\bsnm{Silva}, \binits{F.}}:
A survey on subgraph counting: Concepts, algorithms, and applications to network motifs and graphlets.
ACM Comput. Surv.
\textbf{54}(2)
(2021)
\doiurl{10.1145/3433652}
\end{botherref}
\endbibitem

\bibitem[\protect\citeauthoryear{Ahmed et~al.}{2015}]{PGD_algorithm}
\begin{bchapter}
\bauthor{\bsnm{Ahmed}, \binits{N.K.}},
\bauthor{\bsnm{Neville}, \binits{J.}},
\bauthor{\bsnm{Rossi}, \binits{R.A.}},
\bauthor{\bsnm{Duffield}, \binits{N.}}:
\bctitle{Efficient graphlet counting for large networks}.
In: \bbtitle{2015 IEEE International Conference on Data Mining},
pp. \bfpage{1}--\blpage{10}
(\byear{2015}).
\doiurl{10.1109/ICDM.2015.141}
\end{bchapter}
\endbibitem

\bibitem[\protect\citeauthoryear{Komusiewicz and Sommer}{2020}]{Komusuewicz:Enumerating:2020}
\begin{barticle}
\bauthor{\bsnm{Komusiewicz}, \binits{C.}},
\bauthor{\bsnm{Sommer}, \binits{F.}}:
\batitle{Enumerating connected induced subgraphs: Improved delay and experimental comparison}.
\bjtitle{Discrete Applied Mathematics}
(\byear{2020})
\doiurl{10.1016/j.dam.2020.04.036}
\end{barticle}
\endbibitem

\bibitem[\protect\citeauthoryear{Conte et~al.}{2023}]{practical_graphlets}
\begin{bchapter}
\bauthor{\bsnm{Conte}, \binits{A.}},
\bauthor{\bsnm{Grossi}, \binits{R.}},
\bauthor{\bsnm{Rucci}, \binits{D.}}:
\bctitle{Cage: Cache-aware graphlet enumeration}.
In: \beditor{\bsnm{Nardini}, \binits{F.M.}},
\beditor{\bsnm{Pisanti}, \binits{N.}},
\beditor{\bsnm{Venturini}, \binits{R.}} (eds.)
\bbtitle{String Processing and Information Retrieval},
pp. \bfpage{129}--\blpage{142}.
\bpublisher{Springer},
\blocation{Cham}
(\byear{2023})
\end{bchapter}
\endbibitem

\bibitem[\protect\citeauthoryear{Wang et~al.}{2024}]{WANG2024106425}
\begin{barticle}
\bauthor{\bsnm{Wang}, \binits{S.}},
\bauthor{\bsnm{Xiao}, \binits{C.}},
\bauthor{\bsnm{Casseau}, \binits{E.}}:
\batitle{Algorithms with improved delay for enumerating connected induced subgraphs of a large cardinality}.
\bjtitle{Information Processing Letters}
\bvolume{183},
\bfpage{106425}
(\byear{2024})
\doiurl{10.1016/j.ipl.2023.106425}
\end{barticle}
\endbibitem

\bibitem[\protect\citeauthoryear{Liu et~al.}{2021}]{liu2021learning}
\begin{barticle}
\bauthor{\bsnm{Liu}, \binits{X.}},
\bauthor{\bsnm{Chen}, \binits{Y.-Z.J.}},
\bauthor{\bsnm{Lui}, \binits{J.C.}},
\bauthor{\bsnm{Avrachenkov}, \binits{K.}}:
\batitle{Learning to count: A deep learning framework for graphlet count estimation}.
\bjtitle{Network Science}
\bvolume{9}(\bissue{S1}),
\bfpage{23}--\blpage{60}
(\byear{2021})
\end{barticle}
\endbibitem

\bibitem[\protect\citeauthoryear{Read and Tarjan}{1975}]{Read:Tarjan:Networks:1975}
\begin{barticle}
\bauthor{\bsnm{Read}, \binits{R.C.}},
\bauthor{\bsnm{Tarjan}, \binits{R.E.}}:
\batitle{Bounds on backtrack algorithms for listing cycles, paths, and spanning trees}.
\bjtitle{Networks}
\bvolume{5}(\bissue{3}),
\bfpage{237}--\blpage{252}
(\byear{1975})
\end{barticle}
\endbibitem

\bibitem[\protect\citeauthoryear{Birmel\'{e} et~al.}{2013}]{Birmele:Optimal:2013}
\begin{bchapter}
\bauthor{\bsnm{Birmel\'{e}}, \binits{E.}},
\bauthor{\bsnm{Ferreira}, \binits{R.}},
\bauthor{\bsnm{Grossi}, \binits{R.}},
\bauthor{\bsnm{Marino}, \binits{A.}},
\bauthor{\bsnm{Pisanti}, \binits{N.}},
\bauthor{\bsnm{Rizzi}, \binits{R.}},
\bauthor{\bsnm{Sacomoto}, \binits{G.}}:
\bctitle{Optimal listing of cycles and st-paths in undirected graphs}.
In: \bbtitle{Proceedings of the Twenty-Fourth Annual ACM-SIAM Symposium on Discrete Algorithms}.
\bsertitle{SODA '13},
pp. \bfpage{1884}--\blpage{1896}.
\bpublisher{Society for Industrial and Applied Mathematics},
\blocation{USA}
(\byear{2013})
\end{bchapter}
\endbibitem

\bibitem[\protect\citeauthoryear{Tsukiyama et~al.}{1980}]{Tsukiyama:Shirakawa:JACM:1980}
\begin{barticle}
\bauthor{\bsnm{Tsukiyama}, \binits{S.}},
\bauthor{\bsnm{Shirakawa}, \binits{I.}},
\bauthor{\bsnm{Ozaki}, \binits{H.}},
\bauthor{\bsnm{Ariyoshi}, \binits{H.}}:
\batitle{An algorithm to enumerate all cutsets of a graph in linear time per cutset}.
\bjtitle{Journal of the ACM}
\bvolume{27}(\bissue{4}),
\bfpage{619}--\blpage{632}
(\byear{1980})
\end{barticle}
\endbibitem

\bibitem[\protect\citeauthoryear{Wasa and Uno}{2018}]{Wasa:COCOON:2018}
\begin{bchapter}
\bauthor{\bsnm{Wasa}, \binits{K.}},
\bauthor{\bsnm{Uno}, \binits{T.}}:
\bctitle{Efficient enumeration of bipartite subgraphs in graphs}.
In: \bbtitle{Computing and Combinatorics - 24th International Conference, COCOON 2018, Qing Dao, China, July 2-4, 2018, Proceedings}.
\bsertitle{Lecture Notes in Computer Science},
vol. \bseriesno{10976},
pp. \bfpage{454}--\blpage{466}.
\bpublisher{Springer},
\blocation{Cham, Switzerland}
(\byear{2018})
\end{bchapter}
\endbibitem

\bibitem[\protect\citeauthoryear{Kurita et~al.}{2018}]{Kazuhiro:IEICE:2018}
\begin{barticle}
\bauthor{\bsnm{Kurita}, \binits{K.}},
\bauthor{\bsnm{Wasa}, \binits{K.}},
\bauthor{\bsnm{Uno}, \binits{T.}},
\bauthor{\bsnm{Arimura}, \binits{H.}}:
\batitle{Efficient enumeration of induced matchings in a graph without cycles with length four}.
\bjtitle{IEICE TRANSACTIONS on Fundamentals of Electronics, Communications and Computer Sciences}
\bvolume{E101-A}(\bissue{9}),
\bfpage{1383}--\blpage{1391}
(\byear{2018})
\end{barticle}
\endbibitem

\bibitem[\protect\citeauthoryear{Wasa and Uno}{2018}]{10.1007/978-3-030-04651-4_3}
\begin{bchapter}
\bauthor{\bsnm{Wasa}, \binits{K.}},
\bauthor{\bsnm{Uno}, \binits{T.}}:
\bctitle{An efficient algorithm for enumerating induced subgraphs with bounded degeneracy}.
In: \beditor{\bsnm{Kim}, \binits{D.}},
\beditor{\bsnm{Uma}, \binits{R.N.}},
\beditor{\bsnm{Zelikovsky}, \binits{A.}} (eds.)
\bbtitle{Combinatorial Optimization and Applications},
pp. \bfpage{35}--\blpage{45}.
\bpublisher{Springer},
\blocation{Cham}
(\byear{2018})
\end{bchapter}
\endbibitem

\bibitem[\protect\citeauthoryear{Kurita et~al.}{2021}]{DBLP:journals/corr/abs-1906-09680}
\begin{barticle}
\bauthor{\bsnm{Kurita}, \binits{K.}},
\bauthor{\bsnm{Wasa}, \binits{K.}},
\bauthor{\bsnm{Uno}, \binits{T.}},
\bauthor{\bsnm{Arimura}, \binits{H.}}:
\batitle{A constant amortized time enumeration algorithm for independent sets in graphs with bounded clique number}.
\bjtitle{Theor. Comput. Sci.}
\bvolume{874},
\bfpage{32}--\blpage{41}
(\byear{2021})
\doiurl{10.1016/J.TCS.2021.05.008}
\end{barticle}
\endbibitem

\bibitem[\protect\citeauthoryear{Wasa et~al.}{2012}]{wasa2012constant}
\begin{bchapter}
\bauthor{\bsnm{Wasa}, \binits{K.}},
\bauthor{\bsnm{Kaneta}, \binits{Y.}},
\bauthor{\bsnm{Uno}, \binits{T.}},
\bauthor{\bsnm{Arimura}, \binits{H.}}:
\bctitle{Constant time enumeration of bounded-size subtrees in trees and its application}.
In: \beditor{\bsnm{Gudmundsson}, \binits{J.}},
\beditor{\bsnm{Mestre}, \binits{J.}},
\beditor{\bsnm{Viglas}, \binits{T.}} (eds.)
\bbtitle{Computing and Combinatorics},
pp. \bfpage{347}--\blpage{359}.
\bpublisher{Springer},
\blocation{Berlin, Heidelberg}
(\byear{2012})
\end{bchapter}
\endbibitem

\bibitem[\protect\citeauthoryear{Uno}{2003}]{uno2003two}
\begin{botherref}
\oauthor{\bsnm{Uno}, \binits{T.}}:
Two general methods to reduce delay and change of enumeration algorithms
(2003)
\end{botherref}
\endbibitem

\bibitem[\protect\citeauthoryear{Tarjan}{1974}]{DBLP:journals/ipl/Tarjan74}
\begin{barticle}
\bauthor{\bsnm{Tarjan}, \binits{R.E.}}:
\batitle{A note on finding the bridges of a graph}.
\bjtitle{Inf. Process. Lett.}
\bvolume{2}(\bissue{6}),
\bfpage{160}--\blpage{161}
(\byear{1974})
\doiurl{10.1016/0020-0190(74)90003-9}
\end{barticle}
\endbibitem

\bibitem[\protect\citeauthoryear{Hopcroft and Tarjan}{1973}]{tarjan_articulation_points}
\begin{barticle}
\bauthor{\bsnm{Hopcroft}, \binits{J.}},
\bauthor{\bsnm{Tarjan}, \binits{R.}}:
\batitle{Algorithm 447: Efficient algorithms for graph manipulation}.
\bjtitle{Commun. ACM}
\bvolume{16}(\bissue{6}),
\bfpage{372}--\blpage{378}
(\byear{1973})
\doiurl{10.1145/362248.362272}
\end{barticle}
\endbibitem

\bibitem[\protect\citeauthoryear{Harary and Norman}{1960}]{Harary1960LineGraph}
\begin{barticle}
\bauthor{\bsnm{Harary}, \binits{F.}},
\bauthor{\bsnm{Norman}, \binits{R.Z.}}:
\batitle{Some properties of line digraphs}.
\bjtitle{Rendiconti del Circolo Matematico di Palermo}
\bvolume{9}(\bissue{2}),
\bfpage{161}--\blpage{168}
(\byear{1960})
\doiurl{10.1007/BF02854581}
\end{barticle}
\endbibitem

\bibitem[\protect\citeauthoryear{Conte et~al.}{2023}]{conte_et_al:LIPIcs.ISAAC.2023.21}
\begin{bchapter}
\bauthor{\bsnm{Conte}, \binits{A.}},
\bauthor{\bsnm{Grossi}, \binits{R.}},
\bauthor{\bsnm{Punzi}, \binits{G.}},
\bauthor{\bsnm{Uno}, \binits{T.}}:
\bctitle{{A Compact DAG for Storing and Searching Maximal Common Subsequences}}.
In: \beditor{\bsnm{Iwata}, \binits{S.}},
\beditor{\bsnm{Kakimura}, \binits{N.}} (eds.)
\bbtitle{34th International Symposium on Algorithms and Computation (ISAAC 2023)}.
\bsertitle{Leibniz International Proceedings in Informatics (LIPIcs)},
vol. \bseriesno{283},
pp. \bfpage{21}--\blpage{12115}.
\bpublisher{Schloss Dagstuhl -- Leibniz-Zentrum f{\"u}r Informatik},
\blocation{Dagstuhl, Germany}
(\byear{2023}).
\doiurl{10.4230/LIPIcs.ISAAC.2023.21} .
\burl{https://drops.dagstuhl.de/entities/document/10.4230/LIPIcs.ISAAC.2023.21}
\end{bchapter}
\endbibitem

\end{thebibliography}

\end{document}